\documentclass[letterpaper, 10pt, conference]{ieeeconf}      %

\IEEEoverridecommandlockouts                              %

\usepackage{xparse}
\usepackage{float}
\usepackage{amsfonts}
\usepackage{amssymb}
\usepackage{bbm}
\usepackage[Export]{adjustbox} 
\usepackage{nicematrix}
\usepackage{bm}

\usepackage{mathtools}

\usepackage{multirow}
\usepackage{empheq}
\usepackage{todonotes}
\usepackage{accents}
\usepackage{mathrsfs}
\newcommand{\leftrarrows}{\mathrel{\raise.75ex\hbox{\oalign{%
  $\scriptstyle\leftarrow$\cr
  \vrule width0pt height.5ex$\hfil\scriptstyle\relbar$\cr}}}}
\newcommand{\lrightarrows}{\mathrel{\raise.75ex\hbox{\oalign{%
  $\scriptstyle\relbar$\hfil\cr
  $\scriptstyle\vrule width0pt height.5ex\smash\rightarrow$\cr}}}}
\newcommand{\Rrelbar}{\mathrel{\raise.75ex\hbox{\oalign{%
  $\scriptstyle\relbar$\cr
  \vrule width0pt height.5ex$\scriptstyle\relbar$}}}}

\makeatletter
\def\leftrightarrowsfill@{\arrowfill@\leftrarrows\Rrelbar\lrightarrows}
\newcommand{\xleftrightarrows}[2][]{\ext@arrow 3399\leftrightarrowsfill@{#1}{#2}}
\makeatother

\makeatletter
\def\rightleftarrowsfill@{\arrowfill@\lrightarrows\Rrelbar\leftrarrows}
\newcommand{\xrightleftarrows}[2][]{\ext@arrow 3399\rightleftarrowsfill@{#1}{#2}}
\makeatother

\NewDocumentCommand{\tld}{s d()}{
    \IfNoValueTF{#2}{
        ^{\sim}
    }{
        \IfBooleanTF{#1}{
            \tilde{#2}
        }{
            \widetilde{#2}
        }
    }
}

\NewDocumentCommand{\Forall}{s}{
    \IfBooleanTF{#1}{
        \, \forall \,
    }{
        \text{ for all }
    }
}

\newcommand\numberthis{\addtocounter{equation}{1}\tag{\theequation}}

\NewDocumentCommand{\rF}{
    O{R} d()
}{
    \IfNoValueTF{#2}{
        \mathbb{#1}
    }{
        \mathbb{#1}_{#2}
    }
} 
\NewDocumentCommand{\cF}{O{C}}{\mathbb{#1}} %

\NewDocumentCommand{\ii}{s}{
    \IfBooleanTF{#1}{
        \mathbf{i}
    }{
        i
    }
}

\NewDocumentCommand{\tRe}{sd()}{
    \IfBooleanTF{#1}{
        \mathrm{Re}\,
    }{
        \mathrm{Re}\left(#2\right)
    }
}

\NewDocumentCommand{\tIm}{r()}{
    \mathrm{Im}\left(#1\right)
}

\NewDocumentCommand{\sint}{sd()}{
    \IfBooleanTF{#1}{
        
    }{
        \mathrm{int}\,#2
    }
}

\NewDocumentCommand{\mati}{s o}{
    \IfNoValueTF{#2}{
        \IfBooleanTF{#1}{
            \mathbb{I}
        }{
            I
        }
    }{
        \IfBooleanTF{#1}{
            \mathbb{I}_{#2}
        }{
            I_{#2}
        }
    }
}

\NewDocumentCommand{\mato}{s}{
    \IfBooleanTF{#1}{
        \mathbf{0}
    }{
        0
    }
} 

\NewDocumentCommand{\diag}{sm}{
    \IfBooleanTF{#1}{
        \mathrm{diag}
        \left\{
            #2
        \right\}
    }{
        D_{#2}
    }
}

\NewDocumentCommand{\blkdiag}{m}{
    \mathrm{blkdiag}
    \{
        #1
    \}
}

\NewDocumentCommand{\inv}{d()}{
    \IfNoValueTF{#1}{
        ^{-1}
    }{
        \left(#1\right)^{-1}
    }
}

\NewDocumentCommand{\tp}{s d()}{
    \IfBooleanTF{#1}{
        \IfNoValueTF{#2}{
            ^{\mathsf{T}}
        }{
            \left(#2\right)^{\mathsf{T}}
        }
    }{
        \IfNoValueTF{#2}{
            ^\mathsf{T}
        }{
            \left(#2\right)^{\mathsf{T}}
        }
    }
}

\NewDocumentCommand{\ct}{s d()}{
    \IfBooleanTF{#1}{
        \IfNoValueTF{#2}{
            ^{\mathsf{H}} 
        }{
            \left(#2\right)^{\mathsf{H}}
        }
    }{
        \IfNoValueTF{#2}{
            ^{\mathsf{H}} 
        }{
            \left(#2\right)^{\mathsf{H}}
        }
    }
}

\NewDocumentCommand{\cj}{r()}{
    \overline{#1}
}

\NewDocumentCommand{\proj}{s O{}}{
    \IfBooleanTF{#1}{
        \IfNoValueTF{#2}{
            \bm{\Pi}
        }{
            \bm{\Pi}_{#2}
        }
    }{
        \IfNoValueTF{#2}{
            \Pi
        }{
            \Pi_{\mathrm{#2}}
        }
    }
}

\NewDocumentCommand{\norm}{r() o}{
    \IfNoValueTF{#2}{
        \left\|#1\right\|
    }{
        \left\|#1\right\|_{#2}
    }
}

\NewDocumentCommand{\iprod}{r() r() o}{
    \IfNoValueTF{#3}{
        \left \langle #1, #2 \right \rangle
    }{
        \left \langle #1, #2 \right \rangle_{#3}
    } 
}

\usetikzlibrary{calc}

\def\mset{\mathcal{M}}

\def\symset{\mathcal{X}}
\def\symreg{\mathbf{X}}
\def\symsettld{\tilde{\mathcal{X}}}
\def\geqsl{\geqslant}
\def\leqsl{\leqslant}
\def\uncert{\varXi}
\def\gmin{\underline{\gamma}}
\def\gmax{\overline{\gamma}}

\NewDocumentCommand{\infreg}{sd[]}{
    {\IfBooleanTF{#1}{\mathcal{X}}{\mathbf{X}}}^{\mathrm{inf}}\IfNoValueF{#2}{_{#2}}
}
\NewDocumentCommand{\supreg}{sd[]}{
    {\IfBooleanTF{#1}{\mathcal{X}}{\mathbf{X}}}^{\mathrm{sup}}\IfNoValueF{#2}{_{#2}}
}

\newcommand{\svmax}{\overline{\sigma}}
\newcommand{\svmin}{\underline{\sigma}}

\newcommand{\rhinf}{\mathcal{RH}_\infty}
\def\hinf{{\mathcal{H}_\infty}}
\NewDocumentCommand{\nrmhinf}{r()}{%
    \|#1\|_\hinf
}
\NewDocumentCommand{\anghinf}{D[]{\theta}r()}{%
    \Phi_{#1}(#2)_\hinf
}

\NewDocumentCommand{\angmin}{o}{
    \underline{\varphi}\IfNoValueF{#1}{_{#1}}
}
\NewDocumentCommand{\angmax}{o}{
    \overline{\varphi}\IfNoValueF{#1}{_{#1}}
}
\NewDocumentCommand{\ang}{o}{
    \varphi\IfNoValueF{#1}{_{#1}}
}

\NewDocumentCommand{\sphmin}{o}{\underline{\psi}\IfNoValueF{#1}{_{#1}}}
\NewDocumentCommand{\sphmax}{o}{\overline{\psi}\IfNoValueF{#1}{_{#1}}}
\NewDocumentCommand{\sph}{o}{\psi\IfNoValueF{#1}{_{#1}}}

\NewDocumentCommand{\kmin}{D[]{k}}{\underline{#1}}
\NewDocumentCommand{\kmax}{D[]{k}}{\overline{#1}}

\NewDocumentCommand{\degmin}{d()}{
    \underline{d}\IfNoValueF{#1}{_{#1}}
}
\NewDocumentCommand{\degmax}{d()}{
    \overline{d}\IfNoValueF{#1}{_{#1}}
}
\newcommand{\invmap}{f_{\mathrm{inv}}}

\newcommand{\ztld}{\tilde{z}}
\def\hsp{\mathbb{H}}

\NewDocumentCommand{\convhull}{s r()}{
    \IfBooleanTF{#1}{
        \mathrm{co}\,\left\{#2\right\}
    }{
        \mathrm{co}\,#2
    }
}

\NewDocumentCommand{\chordhull}{s D[]{\theta} r()}{
    \IfBooleanTF{#1}{
        \mathrm{ch}^{#2}\,\left\{#3\right\}
    }{
        \mathrm{ch}^{#2}\,#3
    }
}

\NewDocumentCommand{\circhull}{s D[]{\theta} r()}{
    \IfBooleanTF{#1}{
        \mathrm{cir}^{#2}\,\left\{#3\right\}
    }{
        \mathrm{cir}^{#2}\,#3
    }
}

\NewDocumentCommand{\parahull}{s r()}{
    \IfBooleanTF{#1}{
        \mathrm{para}\,\left\{#2\right\}
    }{
        \mathrm{para}\,#2
    }
}

\NewDocumentCommand{\conihull}{s r()}{
    \IfBooleanTF{#1}{
        \mathrm{cone}\,\left\{#2\right\}
    }{
        \mathrm{cone}\,#2
    }
}
\NewDocumentCommand{\starhull}{s r()}{
    \IfBooleanTF{#1}{
        \mathrm{star}\,\left\{#2\right\}
    }{
        \mathrm{star}\,#2
    }
}

\NewDocumentCommand{\gasys}{s}{
    \IfBooleanTF{#1}{\mathcal{D}_{\circ}}{\mathcal{D}}
}

\NewDocumentCommand{\cone}{s}{
    \IfBooleanTF{#1}{\mathbf{C}}{\mathbf{C}_{\circ}}
}

\NewDocumentCommand{\cosys}{s}{
    \IfBooleanTF{#1}{\mathcal{C}_{\circ}}{\mathcal{C}}
}

\NewDocumentCommand{\sesys}{s}{
    \IfBooleanTF{#1}{\mathcal{S}_{\circ}}{\mathcal{S}}
}

\NewDocumentCommand{\epi}{s r()}{
    \IfBooleanTF{#1}{
        \mathbf{epi}\,\left\{#2\right\}
    }{
        \mathbf{epi}\,#2
    }
}

\NewDocumentCommand{\scomp}{s d()}{
    \IfBooleanTF{#1}{
        \IfNoValueTF{#2}{
            ^{\mathsf{c}} 
        }{
            \left(#2\right)^{\mathsf{c}}
        }
    }{
        \IfNoValueTF{#2}{
            ^{\mathsf{c}} 
        }{
            \left(#2\right)^{\mathsf{c}}
        }
    }
}

\NewDocumentCommand{\dwshell}{sD[]{\mathbf{DW}}r()}{
    \IfBooleanTF{#1}{\widetilde{#2}}{#2}(#3)
}

\NewDocumentCommand{\invdwshell}{sD[]{\mathbf{DW}}r()}{
    \IfBooleanTF{#1}{\widetilde{#2}^{-1}}{#2^{-1}}(#3)
}

\NewDocumentCommand{\srg}{sd[]r()}{
    \mathbf{SRG}\IfNoValueF{#2}{_{\IfBooleanTF{#1}{\mathrm{#2}}{#2}}}(#3)
}

\NewDocumentCommand{\invsrg}{d[]r()}{
    \mathbf{SRG}\IfNoValueF{^{-1}}{^{-1}_{\mathrm{#1}}}(#2)
}

\NewDocumentCommand{\disc}{s D[]{} d()}{
    \IfBooleanTF{#1}{
        \mathbf{D}^{\mathsf{c}}
    }{
        \mathbf{D}
    }
    \IfNoValueTF{#2}{}{_{#2}}
    \IfNoValueTF{#3}{}{
        (#3)
    }
} 

\NewDocumentCommand{\region}{sD[]{R}d()}{
    \IfBooleanTF{#1}{
        \mathbf{#2}^{\mathsf{c}}
    }{
        \mathbf{#2}
    }
    \IfNoValueTF{#3}{}{
        (#3)
    }
}

\NewDocumentCommand{\sector}{sD[]{}d()}{
    \IfBooleanTF{#1}{
        \mathbf{S}^{\mathsf{c}}
    }{
        \mathbf{S}
    }
    \IfNoValueTF{#2}{}{_{#2}}
    \IfNoValueTF{#3}{}{
        [#3]
    }
}

\NewDocumentCommand{\lineseg}{sD[]{}d()}{
    \IfBooleanTF{#1}{
        \mathbf{L}^{\mathsf{c}}
    }{
        \mathbf{L}
    }
    \IfNoValueTF{#2}{}{_{#2}}
    \IfNoValueTF{#3}{}{
        (#3)
    }
}

\NewDocumentCommand{\parab}{sD[]{}d()}{
    \IfBooleanTF{#1}{
        \mathbf{P}^{\mathsf{c}}
    }{
        \mathbf{P}
    }
    \IfNoValueTF{#2}{}{_{#2}}
    \IfNoValueTF{#3}{}{
        [#3]
    }
}

\NewDocumentCommand{\seg}{s}{
    \IfBooleanTF{#1}{
        \mathbf{Seg}^{\mathsf{c}}
    }{
        \mathbf{Seg}
    }
}

\NewDocumentCommand{\plane}{sD[]{}d()}{
    \IfBooleanTF{#1}{
        \mathbf{H}^{\mathsf{c}}
    }{
        \mathbf{H}
    }
    \IfNoValueTF{#2}{}{_{#2}}
    \IfNoValueTF{#3}{}{
        (#3)
    }
}

\NewDocumentCommand{\ssreal}{D(){A}D(){B}D(){C}D(){D}}{%
    \begin{bNiceArray}{@{}c|c@{}}[left-margin=.5em,right-margin=.5em]
        #1 & #2  \\\hline 
        #3 & #4
    \end{bNiceArray}
}

\NewDocumentCommand{\lmi}{D(){X,Y}}{%
    \mathrm{LMI}(#1)
}
\NewDocumentCommand{\mi}{D(){X,Y}}{%
    \mathrm{MI}(#1)
}
\NewDocumentCommand{\milhs}{D(){X,Y}}{%
    H(#1)
}

\def\mdisc{\Theta_{\mathsf{d}}}

\usepackage[norelsize,ruled,lined,linesnumbered]{algorithm2e}
\IncMargin{-.6em}
\usepackage{etoolbox}
\makeatletter
\patchcmd{\@algocf@start}%
    {-1.5em}%
    {0pt}%
    {}{}%
\makeatother

\NewDocumentCommand{\larg}{D[]{c}}{%
    \overline{#1}
}
\NewDocumentCommand{\smal}{D[]{c}}{%
    \underline{#1}
}

\usepackage[capitalise]{cleveref}
\usepackage{subcaption}
\usepackage{adjustbox}

\newtheorem{theorem}{Theorem}
\newtheorem{lemma}{Lemma}
\newtheorem{prop}{Proposition}
\newtheorem{obs}{Observation}
\newtheorem{corol}{Corollary}
\newtheorem{example}{Example}
\newtheorem{defn}{Definition}
\newtheorem{remark}{Remark}
\crefname{defn}{Def.}{Defs.}
\Crefname{defn}{Definition}{Definitions}

\crefname{theorem}{Thm.}{Thms.}
\Crefname{theorem}{Theorem}{Theorems}

\crefname{lemma}{Lem.}{Lems.}
\Crefname{lemma}{Lemma}{Lemmas}

\crefname{prop}{Prop.}{Props.}
\Crefname{prop}{Proposition}{Propositions}

\crefname{remark}{Rem.}{Rems.}
\Crefname{remark}{Remark}{Remarks}

\crefname{obs}{Obs.}{Obs.}
\Crefname{obs}{Observation}{Observations}

\crefname{corol}{Corol.}{Corols.}
\Crefname{corol}{Corollary}{Corollaries}

\makeatletter
\newenvironment{myproof}[1][Proof]{%
  \noindent\hspace{2em}{\itshape #1.\ }%
}{%
  \hspace*{\fill}~\QED\par
}
\makeatother

\title{\LARGE \bf
    Symmetry Is Almost All You Need: Robust Stability with Uncertainty Induced by Symmetric SRG Regions
}

\author{Ding Zhang$^{1}$,  Di Zhao$^{2}$, Philipp Braun$^{1}$, and Jianqi Chen$^{2}$%
\thanks{$^{1}$Ding Zhang and Philipp Braun are with the School of Engineering, The Australian National University, Canberra, Australia {\tt\small ding.zhang@connect.ust.hk, philipp.braun@anu.edu.au}}%
\thanks{$^{2}$Di Zhao and Jianqi Chen are with the School of Robotics and Automation, Nanjing University, Suzhou, Jiangsu, China {\tt\small dizhao@nju.edu.cn, jqchen@nju.edu.cn}}
}

\begin{document}

\maketitle
\thispagestyle{empty}
\pagestyle{empty}

\begin{abstract}

This paper investigates the robust stability problem of a feedback system in the presence of uncertainties induced by graphical regions in the plane where the scaled relative graphs (SRGs) reside.
Our main results are developed using a novel and intuitive concept, the Davis-Wielandt shell, together with its connection to SRGs and related variants.
We first study a matrix robust nonsingularity (MRN) problem for two types of graphically induced uncertainty sets: one with prior information on $\theta$ and one without. 
    In the former case, we show that, whenever the uncertainty-inducing region is mirror symmetric about the $\theta$-axis, the separation between a specific variant of the SRG and the region provides a necessary and sufficient condition for MRN. When the region is asymmetric, the necessity generally fails. This recovers the necessity of the small gain condition, and reveals the necessity of small angle conditions and sectored-disc conditions at the matrix level. 
    In the latter case, we show that an additional $\theta$-circular connectivity property is required to obtain necessary and sufficient conditions. 
Building on these MRN results, we then derive sufficient conditions for robust stability of multi-input multi-output (MIMO) linear time-invariant (LTI) systems under frequencywise symmetric uncertainties. In addition, connections with existing system characteristics such as disc-boundedness are discussed and exploited to obtain state-space characterisations for angle-bounded and mixed gain-angle-bounded systems. Based on these results, we construct a $\theta$-angle-gain profile of a system that provides an intuitive visualisation of its feedback robustness against conic and sectorial uncertainties.

\end{abstract}

\section{INTRODUCTION}
Robust control represents a worst-case analysis and design paradigm and therefore inherently admits conservatism in exchange for guaranteed reliability. Conservatism may enter this paradigm through several channels, among which two prominent ones are the over-estimation of the uncertainty due to \emph{specific system characterisations} and the \emph{innate conservatism of associated analytical tools}. The more accurately the uncertainty is characterised and the less conservative the analytical conditions are, the greater the freedom one would have to design feedback controllers that achieve additional objectives while maintaining guaranteed robustness properties with respect to ground-truth uncertainty.

In contrast to single-input single-output (SISO) linear systems, for which the Nyquist diagram serves as non-conservative system characterisation handy for robustness analysis \cite{vinnicombeUncertaintyFeedbackInfinity2001,qiuIntroductionFeedbackControl2009}, multivariable systems embody  
higher-dimensional information and hence present many possibilities of how they can be described and analysed. Theories such as dissipativity~\cite{willemsDissipativeDynamicalSystems1972a} and Integral Quadratic Constraint (IQC)~\cite{megretskiSystemAnalysisIntegral1997} provide rather general frameworks that encompass a broad class of system characterisations and enable systematic analysis and synthesis procedures. 
Still, these are abstract existence
results, and how to construct a suitable characterisation, e.g., a multiplier, for a given system represents a largely open problem.

From bottom up, fruitful results are obtained around uni-dimensional characteristics such as the classic $\hinf$ gain \cite{zhouRobustOptimalControl1995,vinnicombeUncertaintyFeedbackInfinity2001}, positive real (PR) \cite{andersonSystemTheoryCriterion1967} and negative imaginary (NI) \cite{lanzonStabilityRobustnessFeedback2008} properties.
These have been further enriched by recent advances in uni-dimensional phase-related studies \cite{laibPhaseIQCHierarchical2015,chenPhaseTheoryMultiinput2024,chenSingularAngleNonlinear2025,chenCyclicSmallPhase2025},
    as well as by developments in mixed type characterisations, including
    sector-bounded%
    \footnote{This refers to sector-boundedness defined in terms of time-domain input-output pairs, as in sector-bounded nonlinearities. This notion corresponds to disc-boundedness in the frequency domain and differs from the frequency-domain sector-boundedness considered here.}
    LTI systems \cite{guptaRobustStabilityAnalysis1996,bridgemanConicsectorbasedControlCircumvent2014},
    mixed small-gain-PR systems \cite{griggsMixedSmallGain2007},
    scaled relative graphs (SRGs) \cite{patesScaledRelativeGraph2021,chaffeyGraphicalNonlinearSystem2023,chenSoftHardScaled2025,chenGraphicalDominanceAnalysis2025,baron-pradaMixedSmallGain2025}, and Davis-Wielandt (DW) shells \cite{lestasLargeScaleHeterogeneous2012,zhaoWhenSmallGain2022,zhangLocalStabilityCongestion2025,liangFeedbackStabilityMixed2025,zhangPhantomDavisWielandtShell2025a}.
In particular, the SRG, a recently popularised graphical system characterisation, is a two-dimensional set in the complex plane (referred to as the \emph{SRG plane} hereafter) with favourable algebraic properties \cite{ryuScaledRelativeGraphs2022,chaffeyGraphicalNonlinearSystem2023,patesScaledRelativeGraph2021}, integrating both the system gain and singular angle \cite{chenSingularAngleNonlinear2025,baron-pradaMixedSmallGain2025}. 
Lately, several extensions of the original SRG \cite{eijndenPhaseScaledGraphs2025,zhangPhantomDavisWielandtShell2025a} have been proposed to alleviate its conservatism. Among them, its $\theta$-variant \cite{zhangPhantomDavisWielandtShell2025a} integrates the system gain with %
more general `$\gamma$-segmental phases' \cite{chenCyclicSmallPhase2025}.

\begin{figure}[!t]
    \centering
    \adjincludegraphics[Clip = {.1\width} {0.01\height} {0} {0}, height=2.2cm]{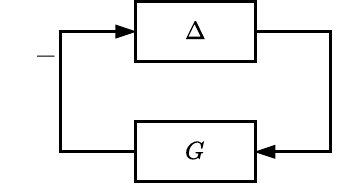}
    \caption{System diagram for the robust stability analysis.\label{fig:sys-diagram}}
    \vspace{-.5cm}
\end{figure}
The aforementioned system characterisations, whether uni-dimensional or combined, all feature graphical interpretations and each can be used to describe an uncertainty set $\Delta$. 
The robust stability (RS) problem then concerns the closed-loop stability of the feedback interconnection shown in \cref{fig:sys-diagram} with respect to such uncertainty sets. 
    A variety of sufficient conditions have been proposed to guarantee robust stability, and their necessity, which often reflects the inherent conservatism attached to the corresponding characterisation, are also of persisting interest (cf. \cite{yangSmallPhaseCondition2025,ringhGainPhaseType2025}).

In terms of the $\theta$-SRG characterisation, systems whose $\theta$-SRG falls within a disc centered at the origin constitute a gain-bounded uncertainty set. 
The associated RS condition is essentially a small-gain condition, which is known to be both necessary and sufficient \cite[Thm.~9.1]{zhouRobustOptimalControl1995}. The robustness with respect to this class of uncertainties can be quantified numerically through the bounded real lemma \cite[Cor.~13.24]{zhouRobustOptimalControl1995}. In contrast, the cases of cone- and sector-boundedness (i.e., angle- and mixed gain-angle-bounded uncertainties) or boundedness by more general regions remain underexplored. 
This paper partially address the RS problem with respect to these SRG region-induced uncertainties. Specifically, we make the following contributions:
    \begin{enumerate}
        \item We propose two ways to induce an uncertainty set based on a prescribed region in the SRG plane, one with a prior specification on $\theta$ and one without.
        We then study a %
        matrix robust nonsingularity problem concerning these uncertainty sets, and reveals the fundamental geometric properties on the inducing regions in both cases so that the SRG-based separation provides a necessary and sufficient condition for this problem. 
        \item Based on the matrix nonsingularity results, we propose sufficient conditions that guarantee robust stability with respect to a broad class of graphically induced uncertainty sets.
        \item We investigate the relationship between cone- and sector-boundedness and disc-boundedness (not necessarily centred at the origin). Based on these relationships, %
        we derive state-space characterisations and construct a graphical $\theta$-angle-gain profile that captures the system's robustness with respect to the conic and sectorial uncertainties.
    \end{enumerate}

\emph{Outline}. Sec.~\ref{sec:problem_formulation} introduces the uncertainty definitions and formulates the matrix robust nonsingularity (MRN) and robust stability (RS) problems. The main results are presented in Secs.~\ref{sec:matrix-ana}-\ref{sec:ss}, which address the MRN (Sec.~\ref{sec:matrix-ana}) and RS (Sec.~\ref{sec:system-ana}) problems, and develop numerical characterisations and a robustness profile (Sec.~\ref{sec:ss}), respectively. 

\subsection{Notation}
Let $\ii* = \sqrt{-1}$ be the imaginary unit. We denote by $\mathbb{R}$, $\mathbb{C}$ the fields of real and complex numbers, respectively.
The sets of positive and nonnegative reals are denoted by $\mathbb{R}_+$ and $\overline{\mathbb{R}}_+$.
A complex number $z\in \cF$ can be represented in Cartesian form $\tRe(z)+\ii*\,\tIm(z)$ and also in polar form $|z|e^{\ii*\angle z}$ where $\angle z$ denotes its argument.
A $\theta$-axis ($\theta \in \rF$) refers to the line $e^{\ii*\theta}\rF$, with the ray in the direction $\theta$ taken as the positive $\theta$-axis and the ray in the direction $\theta+\pi$ as the negative $\theta$-axis. For any $z\in \cF$, we denote by $z_\theta$ its $\theta$-conjugate: $z_\theta := e^{\ii* \theta} \cj(e^{-\ii*\theta} z)$. Note that $z, z_\theta$ are symmetric about the $\theta$-axis.

Given a matrix $A \in \cF^{n\times n}$, we use $A\tp, \cj(A)$, $A\ct*$ to denote its transpose, conjugate, conjugate transpose, respectively. 
$\blkdiag{\cdot}$ denotes a block-diagonal matrix formed by placing the matrices within the braces along the diagonal.
The norm $\|\cdot\|_2$ refers to the induced 2-norm of matrices, which equals the largest singular value of the matrix. We write $H \succcurlyeq \mato$ if $H$ is Hermitian positive semidefinite.

A mapping applied to a set means the image of the set under pointwise evaluation by the mapping.
For two sets $\region[X]$ and $\region[Y]$, $\region[X]\setminus\region[Y]$ denotes the set difference $\{x \in \region[X]: x \notin \region[Y]\}$. 
For a set $\region[S]$ in an ambient space $\hsp$, $\sint(\region[S])$ denotes its interior and $\scomp(\region[S])$ denotes its complement $\hsp\setminus \region[S]$. The epigraph of set $\region[X] \subseteq \hsp\times \rF$, denoted by $\epi(\region[X])$, refers to the set $\{(z,\nu)\in \hsp\times \rF: \nu\geqsl h, (z,h)\in \region[X]\}$.
Typical geometric objects and operations used throughout this paper are defined as in \cref{tab:geo-def}.
In the complex plane, note that the convex hull can be reformulated as a union of $\theta$-chordal hulls:
    $\cup_{\theta\in[0,\pi]} \chordhull[\theta](\region)$.

\begin{table}[t]
\centering
\caption{Definitions of geometric objects and operations.\label{tab:geo-def}}
\begin{NiceTabular}{|@{\,}l@{\hspace{.4em}}p{2.3cm}@{\hspace{.3em}}p{17.2em}@{\,}|}
\hline 
notation & object / operation & definition \\ \hline\hline
$\lambda \region$ & scalar multiplication & $\{\lambda x: x\in \region\}$ \\
$\starhull(\region)$ & star hull & $\cup_{\lambda \in [0,1]} \lambda \region$ \\
$\convhull(\region)$ & convex hull & $\{\lambda z + (1-\lambda)s: z, s \in \region,\lambda \in [0,1]\}$\\
$\chordhull[\theta](\region)$ & $\theta$-chordal hull &  
    $\left\{\lambda z + (1-\lambda) z_\theta : z, z_\theta \in \region, \lambda \in [0,1] \right\}$ %
\\
$\parahull(\region)$ & paraboloidal hull & $\{(\lambda z, \lambda^2 \nu)\in \hsp\times\rF: (z,\nu) \in \region, \lambda \in [0,\infty)\}$\\\hline
$\disc[\gamma]$ & disc (centered at $0$) & $\{z\in \cF: |z|\leqsl \gamma\}$ \\
$\disc[\gamma](s)$ & disc (centered at $s$) & $\{z\in\cF: |z-s|\leqsl \gamma\}$\\
$\cone*[\alpha,\beta]$ & cone & $\{z\in \cF: \alpha\leqsl \angle z \leqsl \beta\}\cup\{0\}$ \\
$\sector[\gamma][\alpha,\beta]$ & sector & $\disc[\gamma]\cap \cone*[\alpha,\beta]$ \\
$\seg[\alpha,\beta]$ & unit disc segment & 
    $\chordhull[\frac{\alpha+\beta}{2}](\{e^{\ii* \phi}: \alpha\leqsl \phi\leqsl \beta\})$\\
$\plane[\gamma]$ & horizontal hyperplane & $\{(z,\nu)\in\cF\times \rF: \nu = \gamma\}$ \\
$\parab[a]$ & paraboloid  & $\{(z,\nu)\in\cF\times \rF: |z|^2 = a^2 \nu\} $ \\\hline
\end{NiceTabular}
\vspace{-.6cm}
\end{table}  

\section{PROBLEM FORMULATION} \label{sec:problem_formulation}
The initial problem studied herein is inspired by \cite{liangFeedbackStabilityMixed2025} and will be formulated in terms of very specific uncertainty sets, namely the disc, conic, and sectorial uncertainties induced by their corresponding regions as defined in \cref{tab:geo-def}, but will be extended to symmetric uncertainties as we develop the main results (in Section \ref{subsec:mrs-generalise}).

\subsection{Disc, Conic, and Sectorial Uncertainties \label{subsec:uncert}}
Let $\hsp$ be a Hilbert space endowed with the inner product $\iprod(\cdot)(\cdot)$ which is linear in the second slot, and the norm $\|\cdot\|$ be the one induced by $\iprod(\cdot)(\cdot)$. 
For a bounded linear operator $G:\hsp\rightarrow \hsp$
and an input $u\in\hsp$,
    the input-wise gain of $G$ at a nonzero $u$ is $\gamma(G;u):= \frac{\|Gu\|}{\|u\|}$,  whereas 
    the input-wise angle%
    \footnote{For the nonzero $u$ with $Gu = 0$, the angle at $u$ is left undefined. However, the gain at $u$ is zero. Thus, the kernel of $G$ ends up being mapped to the origin in the SRG plane (as well as the origin in the Davis-Wielandt %
    shell space).}  
    is $\ang(G;u):= \arccos \frac{\tRe*\iprod(u)(Gu)}{\|Gu\|\|u\|}$.
Then by combining these two parts into a point in the complex plane via the polar form and taking the union of such points over all nonzero $u$, one can obtain the scaled relative graph (SRG) of $G$: 
\begin{align*}
    \srg(G) := 
    \left\{ \gamma(G;u) e^{\pm\ii* \ang(G;u)} : u \neq 0\right\},
\end{align*}
which is an informative two-dimensional graphical representation with rather nice algebraic properties \cite{hannahScaledRelativeGraph2016,ryuScaledRelativeGraphs2022,chaffeyGraphicalNonlinearSystem2023,patesScaledRelativeGraph2021}. 
In addition, given a $\theta \in \rF$,
    the set $\srg[\theta](G)$ associated with $G$ refers to an angular extension of $\srg(G)$, read $\theta$-SRG 
    and can be obtained as $\srg[\theta](G) := e^{\ii*\theta}\srg(e^{-\ii*\theta}G)$.

Let $\mset$ be a prescribed set of linear operators.
Given a gain bound $\gamma >0$ and $\alpha, \beta \in \rF$ with $\alpha\leqsl \beta$, the \emph{disc} uncertainty set and the conic uncertainty set are defined as:
\begin{align} 
    \gasys_\gamma &:=  \left\{\Delta \in \mset: \srg(\Delta) \subseteq \disc[\gamma] \right\}. 
    \label{def:disc} \\
    \cosys[\alpha,\beta] &:= \left\{\Delta \in \mset: \srg[\frac{\beta+\alpha}{2}](\Delta) \subseteq \cone*[\alpha,\beta]\right\},
    \label{def:conic} 
\end{align}respectively. With (\ref{def:disc}) and (\ref{def:conic}), the \emph{sectorial} uncertainty set is defined as the intersection between $\gasys_\gamma$ and $\cosys[\alpha,\beta]$, or more explicitly:
\begin{align}
    \sesys_\gamma[\alpha,\beta] :=&  \left\{\Delta \in \mset : \srg[\frac{\beta+\alpha}{2}](\Delta) \subseteq \sector[\gamma][\alpha,\beta]\right\}. 
    \label{def:sect}
\end{align}
Note that the terms ``cone'' and ``sector'' are often used interchangeably. In this paper, we, however, stringently stipulate that a sector must have a finite gain bound while a cone must not. 
For sets $\cosys[\alpha,\beta]$ and  $\sesys_\gamma[\alpha,\beta]$ that incorporate prescribed angular bounds, 
    they reduce to the set of scalar operators when $\alpha = \beta$. 
    At the other extreme where $\beta-\alpha \geqsl 2\pi$, $\cosys[\alpha,\beta]$ reduces to $\mset$ and $\sesys_\gamma[\alpha,\beta]$, as the intersection between $\gasys_\gamma$ and $\cosys[\alpha,\beta]$, hence coincides with $\gasys_\gamma$.

In the case of stable linear time-invariant systems though, the way
we think about their SRGs will slightly deviate from the unified operator-theoretic perspective that treats them as operators from $\mathcal{L}_2$ to $\mathcal{L}_2$ (a.k.a. the soft SRGs in \cite{chaffeyGraphicalNonlinearSystem2023,patesScaledRelativeGraph2021,chenSoftHardScaled2025,eijndenPhaseScaledGraphs2025,grootExploitingStructureMIMO2025}, for example). Instead, we examine their SRGs frequencywise via their frequency response , namely %
$\srg(\Delta(\ii*\omega))$, where $\Delta(\ii*\omega) \in \cF^{n\times n}$ is treated as a linear transformation on $\cF^n$ (as in \cite{chenGraphicalDominanceAnalysis2025,baron-pradaStabilityResultsMIMO2025}). Then the corresponding $\theta$-SRGs and the uncertainty sets 
are to
be interpreted in line with this %
convention, e.g., $\gasys_\gamma$ refers to $\{\Delta \in \rhinf^{n\times n}:\srg(\Delta(\ii*\omega))\subseteq \disc[\gamma]\,\forall\,\omega\in [0,\infty]\}$.

\subsection{Robust Stability Problem\label{subs:rs-formulation}}
Let $\rhinf^{n\times n}$ be the set of $n$-by-$n$ real rational stable transfer matrices.
We consider the \emph{negative} feedback interconnection as shown in \cref{fig:sys-diagram}, where the loop component $G \in \rhinf^{n\times n}$ is known and $\Delta$ is an uncertain component which is of compatible size and belongs to a prescribed uncertainty set within $\rhinf^{n\times n}$. 
For a fixed $\Delta$, the closed-loop system is stable if $\mati + G\Delta$ has an inverse in $\rhinf^{n\times n}$. 
Then by robust stability, we mean the following:
\begin{defn}[Robust Stability (RS)] \label{def:rs}
    Consider the negative feedback interconnection in Fig. \ref{fig:sys-diagram} with transfer matrix $G \in \rhinf^{n\times n}$. The feedback interconnection defined through $G$ is called robustly stable (RS) with respect to $\uncert \subseteq \rhinf^{n\times n}$ 
    if 
   $\mati + G\Delta$ has an inverse in $\rhinf^{n\times n}$ for every $\Delta \in \uncert$.
\end{defn}

In this paper, we are interested in the following question:
\emph{Which conditions must $G$ satisfy, so that the %
feedback interconnection in Fig. \ref{fig:sys-diagram} is RS with respect to $\uncert$ when $\uncert$ is defined through $\gasys_\gamma$, $\cosys[\alpha,\beta]$, $\sesys_\gamma[\alpha,\beta]$ in \eqref{def:disc}-\eqref{def:sect}, or more general sets?}

\subsection{Matrix Robust Nonsingularity Problem \label{subsec:mrnp}}
    The RS problem, under the specific setup in \cref{subs:rs-formulation} and in the case where $\uncert$ is invariant under contractive scaling, i.e., $\tau \uncert \subseteq \uncert$ for all $\tau \in [0,1]$, boils down to a matrix nonsingularity problem of the form $\mati + G\Delta$. Therefore, we will first examine a robust matrix nonsingularity problem, for which the uncertainty sets as in \eqref{def:disc}-\eqref{def:sect} are instantiated with $\mset = \cF^{n\times n}$ and $\hsp = \cF^n$. Formally, by robust matrix nonsingular property, we mean:
    \begin{defn}[Matrix Robust Nonsingularity (MRN)] \label{defn:mrn}
        A matrix $G \in \cF^{n\times n}$ is said to satisfy the matrix robust nonsingularity (MRN) property with respect to a matrix set $\uncert\subseteq \cF^{n\times n}$, or simply that MRN holds for $G$ w.r.t. $\uncert$, if $\det(\mati+G\Delta)\neq 0$ for all $\Delta \in \uncert$.
    \end{defn} 
    
    Similarly, we are interested in the following question:
   \emph{What conditions must a matrix $G$ satisfy in order to exhibit the MRN property with respect to a matrix uncertainty set $\uncert$ induced by a graphical region in the SRG plane?}

\section{MATRIX ROBUST NONSINGULARITY\label{sec:matrix-ana}}
In this section, we address the MRN problem as set out in \cref{subsec:mrnp}. 
We consider two ways of inducing an uncertainty set from a prescribed region in the SRG plane, %
and discuss in details when and what necessary and sufficient SRG-based conditions for MRN can be established with respect to these region-induced uncertainties.

\subsection{Preliminaries on Davis-Wielandt Shells\label{subsec:prelim-dw-shell}}
To solve the MRN problem, we will be relying on another convex graphical tool in the three-dimensional space, called the Davis-Wielandt (DW) shell, that projects into the SRG and $\theta$-SRG. The DW shell of a matrix $M \in \cF^{n\times n}$, which resides in $\cF\times [0,\infty)$, is defined as:
\begin{align}
    \label{eq:dw-defn}
    \dwshell(M):= 
    \left\{\left(\frac{\iprod(u)(Mu)}{\|u\|^2},\frac{\|Mu\|^2}{\|u\|^2}\right): u\neq \mato\right\}.
\end{align}
The notion of DW shell dates back to the early work of Wielandt (normal case) and Davis (general case) \cite{wielandtEigenvaluesSumsNormal1955,davisShellHilbertspaceOperator1968,davisShellHilbertspaceOperator1970} and is viewed as a form of generalised numerical range \cite{gustafsonNumericalRangeField2012,liDavisWielandtShellsOperators2008} in linear algebra. Its implicit usage in control appears in \cite{jonssonScalableRobustStability2010,kaoCharacterizationRobustStability2009} and is explicitly introduced and developed under the integral quadratic constraints (IQCs) framework in \cite{lestasNetworkStabilityGraph2011,lestasLargeScaleHeterogeneous2012}. It has found a series of applications recently in \cite{zhaoWhenSmallGain2022,liangFeedbackStabilityMixed2025,zhangLocalStabilityCongestion2025} and appears to have intimate tie with the SRG and its variants \cite{zhangPhantomDavisWielandtShell2025a}.
Specifically, \cite{zhangPhantomDavisWielandtShell2025a} reveals that $\srg(M)$ can be obtained from $\dwshell(M)$ via a two-step projection: 
\begin{enumerate}
    \item project each point in $\dwshell(M)$ horizontally onto the paraboloid $\parab[1]$ along the imaginary axis in both directions (horizontal arrows in \cref{fig:dw-srg}); 
    \item project the shadows on the paraboloid from 1) vertically onto the SRG plane (vertical arrows in \cref{fig:dw-srg}). 
\end{enumerate}
Then the variant $\srg[\theta](M)$ is obtained by changing the direction (axis) in 1) to the axis orthogonal to the $\theta$-axis. 
We denote the composition of the abovesaid projections by $\proj[\theta]$, which acts pointwise on $\epi(\parab[1]) \subseteq \cF\times [0,\infty)$ as%
\footnote{Note that in \cite{zhangPhantomDavisWielandtShell2025a}, the notation $\proj[\theta]$ refers to only the half of the first projection. In this paper, for notational simplicity, it denotes the composition of the two projections, which maps from $\epi(\parab[1])$ to the SRG plane.}
\begin{align*}
    \proj[\theta](z,\nu) :=  e^{\ii*\theta} \left(\tRe(e^{-\ii*\theta}z)\pm \ii*\sqrt{\nu - \tRe(e^{-\ii*\theta}z)^2}\right).
\end{align*}
Note that the DW shell always resides in $\epi(\parab[1])$, which we refer to as the \emph{DW space}.
Each point in the DW space is mapped by $\proj[\theta]$ to a  $\theta$-conjugate pair in the SRG plane,
and $\proj[\theta] \dwshell(M) = \srg[\theta](M)$.
Readers may refer to \cite{liDavisWielandtShellsOperators2008,liEigenvaluesSumMatrices2008,lestasNetworkStabilityGraph2011,zhangPhantomDavisWielandtShell2025a} for an expository study on the DW shells and their connection to, e.g., quadratic constraints, SRGs, and numerical ranges. In the following, we only recite those relevant to the specific problems herein.  
\begin{figure}
    \centering
    \subfloat[{$\proj[0] \dwshell(M) = \srg(M)$}]{
        \adjincludegraphics[Clip = {0.14\width} {.16\height} {.2\width} {.05\height}, height=4cm]{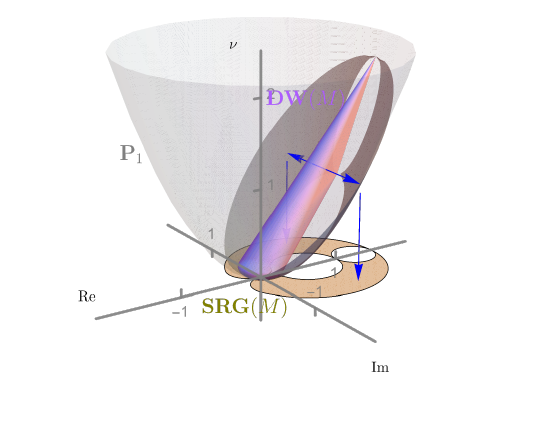}}
    \subfloat[{$\proj[\theta] \dwshell(M) = \srg[\theta](M)$}]{
        \adjincludegraphics[Clip = {0.14\width} {.16\height} {.2\width} {.05\height}, height=4cm]{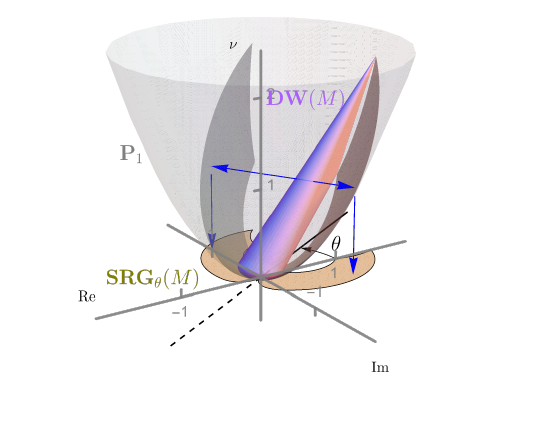}}
    \caption{The $\theta$-SRG can be obtained as a two-step projection of the DW shell.\label{fig:dw-srg}}\vspace{-.6cm}
\end{figure}

The inverse DW shell of $M$ is defined as 
\begin{align*}
    \invdwshell(M):=\invmap(\dwshell(M)\setminus\{(0,0)\})
\end{align*} where $\invmap(z,\nu):=(\cj(z)/\nu, 1/\nu)$, and its inverse $\theta$-SRG is defined as the image of $\invdwshell(M)$ under the same projection that is used to obtain $\srg[\theta](M)$ from $\dwshell(M)$.
In cases where $M$ is nonsingular, we have 
\begin{align*}
    \invdwshell(M) &= \dwshell(M\inv),\\
    \invsrg[\theta](M) &= \srg[\theta](M\inv) = (\srg[-\theta](M))\inv.
\end{align*}
Note that the origin is contained in its DW shell (or $\theta$-SRG) if and only if $M$ has a nontrivial kernel. In such cases, its inverse DW shell (or $\theta$-SRG) becomes unbounded. The origin is removed from its DW shell before being fed into $\invmap$ when defining its inverse shell. This is purely for technical convenience and it is the asymptotic behaviour near the infinity that matters the most in our context. 
Both the DW shell and its inverse shell are contained in $\epi(\parab[1])$, a uniform paraboloidal bound arising from the Cauchy-Schwarz inequality.
With the DW shell and its inverse shell defined, the following separation condition \cite{lestasLargeScaleHeterogeneous2012,liEigenvaluesSumMatrices2008,zhangPhantomDavisWielandtShell2025a} is central for characterising MRN:
\begin{lemma}[DW Separation Condition] \label{lem:dw-sep-ext}
    For given matrices $G, \Delta\in \cF^{n\times n}$, the matrix $\mati + GU\ct \Delta U$ is nonsingular for all unitary $U$ if and \emph{only if} $\invdwshell(-G)\cap\dwshell(\Delta)=\emptyset$.
\end{lemma}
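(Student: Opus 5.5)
We would prove the two directions separately, reducing singularity of $\mati+GU\ct\Delta U$ to the existence of a common point of the two shells.

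\emph{Only-if (separation $\Rightarrow$ nonsingularity).} Suppose $\mati + GU\ct\Delta U$ is singular for some unitary $U$. Then there is $x\neq 0$ with $x = -G y$, where $y := U\ct\Delta U x$. We will use two facts.
\begin{itemize}
\item $y\neq 0$, since otherwise $x=0$.
\item The DW shell is unitarily invariant, so $\dwshell(U\ct\Delta U)=\dwshell(\Delta)$.
\end{itemize}
Hence $p:=\big(\iprod(x)(U\ct\Delta U x)/\|x\|^2,\ \|y\|^2/\|x\|^2\big)\in\dwshell(\Delta)$.

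Next we look at $-G$ at the input $y$. Here $\dwshell(-G)$ contains $q:=\big(\iprod(y)(-Gy)/\|y\|^2,\ \|x\|^2/\|y\|^2\big) = \big(\iprod(y)(x)/\|y\|^2,\ \|x\|^2/\|y\|^2\big)$. This point is not the origin, since $\|x\|\neq 0$.

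Apply $\invmap$: $\invmap(q) = \big(\cj(\iprod(y)(x))/\|x\|^2,\ \|y\|^2/\|x\|^2\big) = \big(\iprod(x)(y)/\|x\|^2,\ \|y\|^2/\|x\|^2\big) = p$. So $p\in\invdwshell(-G)\cap\dwshell(\Delta)$, which contradicts separation.

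\emph{If (nonsingularity for all $U$ $\Rightarrow$ separation).} Suppose the shells meet at a point $p=(z,\nu)$, so $p\in\dwshell(\Delta)$ and $p=\invmap(q)$ for some $q\in\dwshell(-G)\setminus\{(0,0)\}$.

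First we normalise both witnesses.
\begin{itemize}
\item From $p\in\dwshell(\Delta)$: there is a unit vector $u$ with $\iprod(u)(\Delta u)=z$ and $\|\Delta u\|^2=\nu$.
\item From $q\in\dwshell(-G)$: there is $w\neq 0$ with $q=\big(\iprod(w)(-Gw)/\|w\|^2,\ \|Gw\|^2/\|w\|^2\big)$.
\end{itemize}
Note $\nu\neq0$, so $Gw\neq 0$. Scale $w$ so that $\|Gw\|=1$ and set $x:=-Gw$, so $\|x\|=1$. Unwinding $p=\invmap(q)$ then gives $\|w\|^2=\nu$ and $\iprod(x)(w)=z$.

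So the pair $(x,w)$ has the same Gram data as the pair $(u,\Delta u)$: $\|x\|=\|u\|=1$, $\|w\|=\|\Delta u\|$, and $\iprod(x)(w)=\iprod(u)(\Delta u)$. Two pairs of vectors in $\cF^n$ with identical Gram matrices are related by a unitary map. Hence there is a unitary $U$ with $Ux=u$ and $Uw=\Delta u$; this extension of a partial isometry is standard.

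Then $U\ct\Delta U x = U\ct\Delta u = w$, and so $(\mati+GU\ct\Delta U)x = x + Gw = 0$. This makes the matrix singular for that $U$.

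The main obstacle is the bookkeeping in this converse: getting the conjugation convention in $\invmap$ to match inner products that are linear in the second slot, and ensuring the scaling leaves no degenerate case. Degeneracies arise when $\Delta u=0$ or $w=0$, but $\nu>0$ excludes both. The existence of the unitary is routine but must be cited or justified, e.g.\ via Gram–Schmidt on both pairs.
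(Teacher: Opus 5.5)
Your proof is correct in both directions. The paper gives no proof of this lemma and instead defers to the cited literature (Lestas; \cite{liEigenvaluesSumMatrices2008}; Zhang et al.). Your argument is essentially the standard one behind those results. Singularity of $I+GU^{\mathsf{H}}\Delta U$ is converted into two normalised pairs, $(x,w)$ with $x=-Gw$ and $(u,\Delta u)$, having identical Gram data, and the unitary extension of the resulting partial isometry supplies $U$.

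One point in your favour: you work directly with $x=-Gw$ rather than factoring $I+GU^{\mathsf{H}}\Delta U=G\,(G^{-1}+U^{\mathsf{H}}\Delta U)$. That factorisation, followed by the usual criterion for singularity of $A+U^{\mathsf{H}}BU$, would need $G$ invertible. Your version handles singular $G$ uniformly and matches the inverse-shell convention (origin removed, then $f_{\mathrm{inv}}$ applied) exactly.

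Two small corrections.
\begin{itemize}
\item Your direction labels are swapped. In the statement, the ``if'' part is separation $\Rightarrow$ nonsingularity for all $U$, which is how the paper uses it in the proof of \cref{corol:mrn-dw}.
\item In the converse, $Gw\neq 0$ does not follow from $\nu\neq 0$. It follows from $q\neq(0,0)$: if $Gw=0$, both coordinates of $q$ vanish. This is also what makes $f_{\mathrm{inv}}(q)$ well defined, and it yields $\nu=1/\nu_q>0$.
\end{itemize}
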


For a set $\symset$, we denote by $\dwshell(\symset)$ (or $\invdwshell(\symset)$) the union of DW shells (or inverse DW shells) of all matrices in $\symset$, i.e., 
\begin{align*}
    \dwshell(\symset) &:= \cup_{\Delta\in\symset} \dwshell(\Delta),\\
    \invdwshell(\symset) &:= \cup_{\Delta\in\symset} \invdwshell(\Delta).
\end{align*} 
The set $\symset$ is said to be \emph{unitarily invariant} if $U\ct X U\in \symset$ for all $X\in \symset$ and unitary $U$.
For unitarily invariant sets, \cref{lem:dw-sep-ext} can be adapted to the following robust version:
\begin{corol}\label{corol:mrn-dw}
    Given a unitarily invariant matrix set $\symset$ and a matrix $G$ of compatible size, the following are equivalent:
    \begin{enumerate}
        \item MRN holds for $G$ with respect to $\symset$;
        \item $\invdwshell(-G)\cap \dwshell(\symset) = \emptyset$;
        \item $\dwshell(G) \cap \invdwshell(-\symset) = \emptyset$.
    \end{enumerate}
\end{corol}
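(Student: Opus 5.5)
The plan is to derive the equivalence 1)$\Leftrightarrow$2) directly from \cref{lem:dw-sep-ext} via unitary invariance, and then to obtain 2)$\Leftrightarrow$3) from a symmetry argument built on the map $\invmap$.

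For 1)$\Rightarrow$2), suppose MRN holds for $G$ with respect to $\symset$ and fix any $\Delta\in\symset$. Unitary invariance gives $U\ct\Delta U\in\symset$ for every unitary $U$. Hence $\mati+GU\ct\Delta U$ is nonsingular for all unitary $U$. The "only if" direction of \cref{lem:dw-sep-ext} then yields $\invdwshell(-G)\cap\dwshell(\Delta)=\emptyset$. Taking the union over $\Delta\in\symset$ gives 2).

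For 2)$\Rightarrow$1), take any $\Delta\in\symset$. Since $\dwshell(\Delta)\subseteq\dwshell(\symset)$, 2) gives $\invdwshell(-G)\cap\dwshell(\Delta)=\emptyset$. The "if" direction of \cref{lem:dw-sep-ext} gives nonsingularity of $\mati+GU\ct\Delta U$ for all unitary $U$. Choosing $U=\mati$ gives $\det(\mati+G\Delta)\neq0$.

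For 2)$\Leftrightarrow$3), I would first note that $\invmap$ is an involution on $\{(z,\nu):\nu>0\}$. Indeed, $\invmap(\cj(z)/\nu,1/\nu)=(z/\nu\cdot\nu,\nu)=(z,\nu)$.

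Next I would handle the sign change. Points of a DW shell satisfy $\nu\geqsl|z|^2$, so a point with $\nu=0$ must be the origin. Also, $\dwshell(-M)=\{(-z,\nu):(z,\nu)\in\dwshell(M)\}$, and negation of $z$ commutes with $\invmap$. Together these give
\[
\invdwshell(-G)=\invmap\bigl(\dwshell(-G)\setminus\{(0,0)\}\bigr)=R\,\invmap\bigl(\dwshell(G)\setminus\{(0,0)\}\bigr),
\]
where $R(z,\nu)=(-z,\nu)$.

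Now suppose a point $p$ lies in $\invdwshell(-G)\cap\dwshell(\Delta)$. Then $p=R\,\invmap(q)$ for some nonzero $q\in\dwshell(G)$. Since $\invdwshell(-G)$ lies in $\{\nu>0\}$, we have $p\neq(0,0)$, so $R p\in\dwshell(-\Delta)\setminus\{(0,0)\}$. The involution property gives $q=\invmap(Rp)\in\invdwshell(-\Delta)$. So $q\in\dwshell(G)\cap\invdwshell(-\symset)$. The converse direction is symmetric.

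One could alternatively obtain 3) by applying \cref{lem:dw-sep-ext} with the roles exchanged, using $\det(\mati+G\Delta)=\det(\mati+\Delta G)$. I expect the main (though minor) obstacle to be the bookkeeping at the origin and at points with $\nu=0$. The removal of $(0,0)$ in the definition of the inverse shell is exactly what keeps $\invmap$ well defined. The inequality $\nu\geqsl|z|^2$ guarantees that no other point has $\nu=0$.
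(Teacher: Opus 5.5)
Your proof is correct. For 1)$\Leftrightarrow$2) you argue exactly as the paper does. In one direction you use unitary invariance plus the ``only if'' part of \cref{lem:dw-sep-ext}; in the other you use the ``if'' part with $U=\mati$.

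For the third item the routes differ. The paper obtains 1)$\Leftrightarrow$3) by swapping the roles of $G$ and $\Delta$. That is, it applies \cref{lem:dw-sep-ext} to the pair $(\Delta,G)$, uses $\det(\mati+\Delta U\ct GU)=\det(\mati+GU\Delta U\ct)$, and invokes unitary invariance once more. You instead prove 2)$\Leftrightarrow$3) directly, by showing that $R\circ\invmap$ maps $\invdwshell(-G)\cap\dwshell(\Delta)$ bijectively onto $\dwshell(G)\cap\invdwshell(-\Delta)$.

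Your version is purely geometric. It uses neither \cref{lem:dw-sep-ext} nor unitary invariance, and it holds for each $\Delta$ separately, hence for an arbitrary set $\symset$. It also makes clear that the two 3-D separation conditions are one statement viewed through the inversion $\invmap$. The paper's version is shorter once the lemma is available, but it links 2) and 3) only through the nonsingularity property, not through the geometry of the shells.

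Your bookkeeping at the origin is exactly what is needed. The bound $\nu\geqslant|z|^2$ forces $\nu>0$ away from $(0,0)$, which is what makes $\invmap$ well defined and involutive on the relevant sets.
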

\begin{proof}
    For each $\Delta \in \symset$, 2) ensures that $\invdwshell(-G)\cap\Delta = \emptyset$. By the ``if'' part of \cref{lem:dw-sep-ext}, this implies that $\mati + G\Delta$ is nonsingular. Recalling \cref{defn:mrn}, we conclude that 2) implies 1). Conversely, since $\symset$ is unitarily invariant, $\Delta\in \symset$ implies that $U\ct \Delta U\in\symset$ for all unitary $U$. By the ``only if'' part of \cref{lem:dw-sep-ext}, it follows that 1) implies 2). Hence, 1) and 2) are equivalent. The equivalences between 1) and 3) can be shown similarly by swapping the roles of $G$ and $\Delta$.
\end{proof}
The above DW separation condition provides a general 3-D characterisation of MRN with respect to a unitarily invariant set.
We are now in a position to solve the MRN problems on the SRG plane.

\subsection{Robust Conditions: Disc, Conic, and Sectorial Regions\label{subsec:mrs-special}}
In this section, we present necessary and sufficient SRG-based conditions for MRN with respect to uncertainties induced from simple geometric regions (disc, cone, sector) determined by gain bound and/or angular intervals. The main purpose is to answer our initial questions and to demonstrate the key ideas in a relatively simple setting. The proofs are omitted, as the entire section can be viewed as a special case of the results developed in the next section.

Note that $\dwshell(\Delta)$ is invariant under unitary similarity transformation on $\Delta$, i.e., $\dwshell(U\ct* \Delta U) = \dwshell(\Delta)$ for all unitary $U$, and hence $\srg[\theta](\Delta)$ is unitarily invariant as well. 
Then it follows that any matrix uncertainty set defined in terms of the $\theta$-SRG (or the DW shell) being contained within a prescribed 2-D (or 3-D) region, including  $\gasys_\gamma, \cosys[\alpha,\beta]$, and $\sesys_\gamma[\alpha,\beta]$, is unitarily invariant. Therefore, we can simply apply \cref{corol:mrn-dw} to the disc, conic, and sectorial uncertainty sets. This leads us to the following characterisations:
\begin{prop} \label{prop:ga-co-sec-char}
     Given $\gamma>0$, $\alpha,\beta\in \rF$ with $\alpha \leqsl \beta$, the following equalities hold when the matrix dimension is at least $2$:
     \begin{align}
        \dwshell(\gasys_\gamma) &= \epi(\parab[1]) \cap \{(z,\nu): \nu \leqslant \gamma^2\}; \label{eq:ga-char}\\
        \dwshell(\cosys[\alpha,\beta]) &= \parahull((\seg[\alpha,\beta]\times \{1\})); \label{eq:ang-char}\\
        \dwshell(\sesys_\gamma[\alpha,\beta]) &= 
        \dwshell(\gasys_\gamma) \cap \dwshell(\cosys[\alpha,\beta]). \label{eq:ga-ang-char}
     \end{align}
     When the matrix dimension is $1$ (i.e., in the scalar case), the left-hand sides of \cref{eq:ga-char,eq:ang-char,eq:ga-ang-char} degenerate to the intersections of their corresponding right-hand sides with the surface $\parab[1]$.
\end{prop}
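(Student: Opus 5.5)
The approach is a two-inclusion argument for each equality, pivoting on a pointwise description of which DW-space points are compatible with each SRG constraint. First, for a single unit vector $u$ I will record that the DW point $(z,\nu)=(\iprod(u)(\Delta u),\|\Delta u\|^2)$ is mapped by $\proj[\theta]$ to the $\theta$-conjugate pair $\sqrt{\nu}\,e^{\ii*\theta}e^{\pm\ii*\phi}$, where $\cos\phi = \tRe(e^{-\ii*\theta}z)/\sqrt{\nu}$. Hence $\srg[\theta](\Delta)\subseteq \disc[\gamma]$ holds iff every DW point has $\nu\leqsl\gamma^2$. Also, with $\theta=(\alpha+\beta)/2$, $\srg[\theta](\Delta)\subseteq\cone*[\alpha,\beta]$ holds iff every DW point satisfies $\tRe(e^{-\ii*\theta}z)\geqsl \sqrt{\nu}\cos\frac{\beta-\alpha}{2}$. (For $\beta-\alpha\geqsl 2\pi$ this is no constraint at all, since Cauchy–Schwarz gives $|z|^2\leqsl\nu$.) Intersecting this half-space-type condition with $\epi(\parab[1])$ gives exactly $\parahull(\seg[\alpha,\beta]\times\{1\})$. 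Indeed, normalising by $\lambda=\sqrt{\nu}$ sends $(z,\nu)$ to $(z/\sqrt\nu,1)$ with $|z/\sqrt\nu|\leqsl1$, and the condition becomes membership in the disc segment cut off by the chord joining $e^{\ii*\alpha}$ and $e^{\ii*\beta}$. This settles the ``$\subseteq$'' directions of \cref{eq:ga-char,eq:ang-char}. For \cref{eq:ga-ang-char}, ``$\subseteq$'' is immediate because $\sesys_\gamma[\alpha,\beta]=\gasys_\gamma\cap\cosys[\alpha,\beta]$.

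For the reverse inclusions, when $n\geqsl2$ I will realise every target point by an explicit matrix. Given $(z,\nu)$ in the right-hand set, I will choose two points $(w_1,|w_1|^2)$ and $(w_2,|w_2|^2)$ on $\parab[1]$ and a convex weight $t$ with $t w_1+(1-t)w_2=z$ and $t|w_1|^2+(1-t)|w_2|^2=\nu$. Then $\Delta=\mathrm{diag}(w_1,w_2)$ (padded by $w_1$ for larger $n$) with $u=(\sqrt t,\sqrt{1-t},0,\dots)$ attains $(z,\nu)$, because the DW shell of a normal matrix is the convex hull of the points $(\lambda_i,|\lambda_i|^2)$. The $w_i$ must be chosen so that $\Delta$ still lies in the uncertainty set, i.e.\ $w_i\in\disc[\gamma]$, $w_i\in\cone*[\alpha,\beta]$, or both. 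For \cref{eq:ga-char}, I take $w_{1,2}$ on the circle $|w|=\sqrt\nu\leqsl\gamma$, symmetric about the line through $z$; this is possible because $|z|\leqsl\sqrt\nu$. For \cref{eq:ang-char}, I take $w_{1,2}=\sqrt\nu\,e^{\ii*\phi_{1,2}}$, where the chord joining $e^{\ii*\phi_1}$ and $e^{\ii*\phi_2}$ passes through $z/\sqrt\nu\in\seg[\alpha,\beta]$ with $\phi_{1,2}\in[\alpha,\beta]$. Such a chord exists because the segment is the union of chords with endpoints on the arc. Since these choices give $|w_i|=\sqrt\nu\leqsl\gamma$ whenever $\nu\leqsl\gamma^2$, the same construction also yields ``$\supseteq$'' in \cref{eq:ga-ang-char}. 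The origin is covered by $\Delta=0$.

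The scalar case follows the same way. For a scalar $w$ the DW shell is the single point $(w,|w|^2)\in\parab[1]$. The membership conditions above then reduce to intersecting the right-hand sides with $\parab[1]$, and each point of that intersection is attained by the corresponding scalar.

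The main obstacle is the geometric identification in the conic case: showing that the pointwise angle condition coincides exactly with $\parahull(\seg[\alpha,\beta]\times\{1\})$ and that every point of $\seg[\alpha,\beta]$ lies on a chord with both endpoints in the arc. The delicate regime is $\beta-\alpha>\pi$, where the segment is the larger piece of the disc; there I would argue with chords parallel to the $\theta$-axis normal, i.e.\ $\theta$-conjugate pairs, which always stay in the arc. The degenerate case $\alpha=\beta$ reduces to a ray on $\parab[1]$ scaled, and needs a separate check.
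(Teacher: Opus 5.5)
Your route is essentially the paper's. The paper obtains this proposition as a special case of \cref{prop:dw-union-general}. That proof slices $\dwshell(\Delta)$ by horizontal planes and projects each slice with $\proj[\theta]$, which is your pointwise criterion. It then realises each target point with $\blkdiag{s\mati_{m}, s_\theta \mati_{n-m}}$, which is your two-eigenvalue diagonal matrix. The ``$\subseteq$'' halves, the disc case and the scalar case are fine as written.

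The gap is in the membership check for the conic (and hence sectorial) ``$\supseteq$'' step. You assert that $\mathrm{diag}(w_1,w_2)$ lies in the uncertainty set as soon as each $w_i$ lies in the region. For $\cone*[\alpha,\beta]$ with $\beta-\alpha<\pi$ this is false. Take $\alpha=-\pi/4$, $\beta=\pi/4$ (so $\theta=0$), $\Delta=\mathrm{diag}(e^{\mathbf{i}\pi/4},4e^{\mathbf{i}\pi/4})$ and $u=(1,1)/\sqrt{2}$. Then $z=2.5e^{\mathbf{i}\pi/4}$ and $\nu=8.5$, and the resulting SRG point has $\cos\varphi=2.5/\sqrt{17}<1/\sqrt{2}$, i.e.\ $\varphi>\pi/4$. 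So $\Delta\notin\cosys[-\pi/4,\pi/4]$, although both eigenvalues lie in the closed cone. Geometrically, $\parahull((\seg[\alpha,\beta]\times\{1\}))$ is not convex when $\beta-\alpha<\pi$, so the DW segment joining two of its points at different heights can leave it.

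Your construction survives only because you chose $|w_1|=|w_2|=\sqrt{\nu}$, and you need to state that this is the reason. With equal moduli, $\dwshell(\Delta)$ is a horizontal segment at height $\nu$. At fixed $\nu$, your conic criterion $\mathrm{Re}(e^{-\mathbf{i}\theta}z)\geqslant\sqrt{\nu}\cos\frac{\beta-\alpha}{2}$ is affine in $z$, so it holds along the segment because it holds at both endpoints. This also settles the sector, since $\nu\leqslant\gamma^2$ is constant on the segment. Alternatively, take $w_2=(w_1)_\theta$ as the paper does; then $\srg[\theta](\Delta)=\{w_1,w_2\}$ exactly.

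The regime you flagged as delicate is in fact the easy one. For $\beta-\alpha\geqslant\pi$ we have $\cos\frac{\beta-\alpha}{2}\leqslant 0$, so $\mathrm{Re}(e^{-\mathbf{i}\theta}z)-\sqrt{\nu}\cos\frac{\beta-\alpha}{2}$ is concave and the conic constraint set is convex in the DW space. There, any normal matrix with eigenvalues in the cone is admissible. The real subtlety is the narrow-cone case. The case $\alpha=\beta$ needs no separate check, since your construction yields $\Delta=\sqrt{\nu}e^{\mathbf{i}\alpha}\mati$.
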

\begin{figure}[h]
    \centering
    \subfloat[{Disc}]{
        \adjincludegraphics[Clip = {0.14\width} {.16\height} {.05\width} {.06\height}, width=.33\columnwidth]{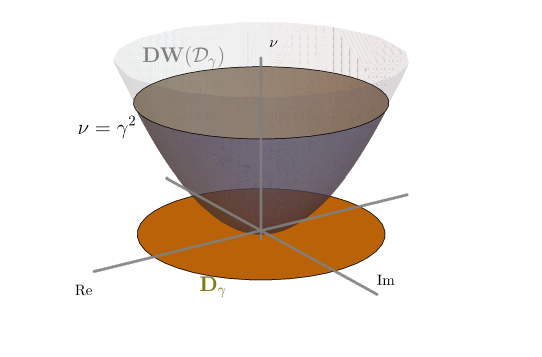}\hspace{-2em}}
    \subfloat[{Cone}]{
        \adjincludegraphics[Clip = {0.14\width} {.16\height} {.05\width} {.06\height}, width=.33\columnwidth]{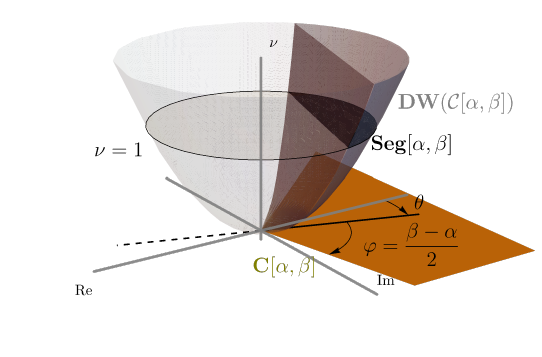}}
    \subfloat[{Sector}]{
        \adjincludegraphics[Clip = {0.14\width} {.16\height} {.05\width} {.06\height}, width=.33\columnwidth]{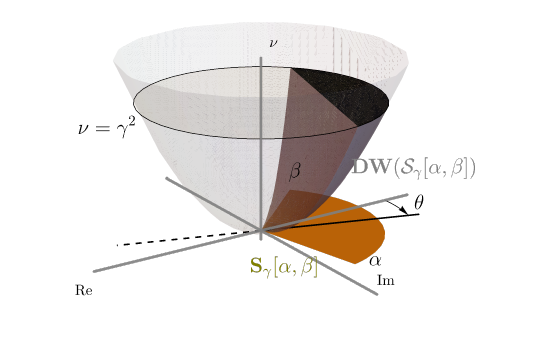}}
    \caption{Precise characterisations of the DW shell unions of disc, conic, and sectorial uncertainties.\label{fig:ga-ang-plots}}\vspace{-1em}
\end{figure}

Combining
\cref{corol:mrn-dw} %
with
the precise characterisation in \cref{prop:ga-co-sec-char}, and bridging the 2-D and 3-D conditions via the projection $\proj[\theta]$, the following three results immediately follow.
\begin{corol}[{Disc Uncertainty}] \label{corol:ma-small-gain}
    Given $\gamma>0$ and $G\in \cF^{n\times n}$, the following statements are equivalent:
    \begin{enumerate}
        \item MRN holds for $G$ with respect to $\gasys_\gamma$;
        \item $\invdwshell(-G) \cap (\epi(\parab[1]) \cap \{(z,\nu): \nu \leqslant \gamma^2\}) = \emptyset$;
        \item $\invsrg[](-G) \cap \disc[\gamma] = \emptyset$;
        \item $\srg(G) \subseteq \sint(\disc[1/\gamma])$, or simply $\|G\|_2 < 1/\gamma$.
    \end{enumerate}
\end{corol}
The above recovers the well-known small-gain condition (e.g.,~\cite[Thm.~9.1]{zhouRobustOptimalControl1995}) at the matrix level, which is both necessary and sufficient.

Before proceeding to the next result, we first revisit an existing angular concept, which we term as $\theta$-angle. 
The $\theta$-angle of $G$, denoted by $\angmax[\theta](G)$, is the largest absolute argument deviation from $\theta$: 
\begin{align*}
\angmax[\theta](G):= \sup\,\{|\angle z - \theta| : z\in \srg[\theta](G)\},
\end{align*}
where $\angle z$ is always confined to $[\theta-\pi,\theta+\pi)$ and hence $\angmax[\theta](G)$ falls within $[0,\pi]$.
This angular quantity (where $\theta$ is often fixed to $0$) appears in various contexts with different names, e.g., \cite{gustafsonAntieigenvalues1994,paulComputationAntieigenvaluesBounded2015,chenSingularAngleNonlinear2025,chenCyclicSmallPhase2025,zhangPhantomDavisWielandtShell2025a}.

\begin{corol}[{Conic Uncertainty}] \label{corol:ma-small-angle}
    Given $\alpha,\beta \in \rF$ with $\alpha\leqsl\beta$ and $G\in \cF^{n\times n}$, the following statements are equivalent:
    \begin{enumerate}
        \item MRN holds for $G$ with respect to $\cosys[\alpha,\beta]$;
        \item $\invdwshell(-G)\cap \parahull((\seg[\alpha,\beta]\times \{1\})) = \emptyset$;
        \item $\invsrg[(\alpha+\beta)/2](-G) \cap \cone*[\alpha,\beta] = \emptyset$;
        \item $\srg[-(\alpha+\beta)/2](G)\subseteq \{0\}\cup\,\sint(\cone*[-\alpha-\pi, \pi - \beta])$, or simply $\angmax[-(\alpha+\beta)/2](G) < \pi - (\beta-\alpha)/2$.
    \end{enumerate}
\end{corol}
This unveils the necessity of angle-type conditions, whose sufficient side has been explored extensively in \cite{chenSingularAngleNonlinear2025,chenCyclicSmallPhase2025}.  

\begin{corol}[{Sectorial Uncertainty}] \label{corol:ma-sect}
    Given $\gamma>0$, $\alpha,\beta \in \rF$ with $\alpha\leqsl\beta$, and $G\in \cF^{n\times n}$, the following statements are equivalent:
    \begin{enumerate}
        \item MRN holds for $G$ with respect to $\sesys_\gamma[\alpha,\beta]$;
        \item $\invdwshell(-G)\cap (\dwshell(\gasys_\gamma) \cap \dwshell(\cosys[\alpha,\beta])) = \emptyset$;
        \item $\invsrg[(\alpha+\beta)/2](-G) \cap \sector[\gamma][\alpha,\beta] = \emptyset$;
        \item $\srg[-(\alpha+\beta)/2](G) \subseteq \sint((\disc[1/\gamma]\cup \cone*[-\alpha-\pi,\pi-\beta]))$.
    \end{enumerate}
\end{corol}
This constitutes a mixed gain-angle type robust condition. It is worth noting that, in the last condition, the allowable region for the SRG variant is the union of the regions corresponding to the gain and angle conditions individually.

\subsection{Robust Conditions: General Regions\label{subsec:mrs-generalise}}
From above, we observe two elements that are essential for building a pathway from \cref{corol:mrn-dw} to a SRG-based MRN condition without introducing conservatism (i.e., with necessity): (1) a precise characterisation of the union of DW shells of uncertainties prescribed by the region in the SRG plane; (2) the confinement of the non-paraboloidal boundary of the DW shell union to a set of cylindrical surfaces who share a common axis in the SRG plane.
Let $\symreg$ be a general closed region in the SRG plane. 
    We introduce the notions of $\theta$-symmetric part and $\theta$-symmetric cover of $\symreg$ (as illustrated in \cref{fig:sym-inf-sup}), and accordingly formalise the $\theta$-symmetry property, which will be of frequent use subsequently:
    \begin{defn}[{{$\theta$}-symmetry}]
        Given $\theta\in \rF$ and a region $\symreg \subseteq \cF$, 
        \emph{the $\theta$-symmetric part} of $\symreg$, denoted by $\infreg[\theta]$, is the \emph{the largest subregion }of $\symreg$ that is symmetric about the $\theta$-axis:
        \begin{align}
            \infreg[\theta] 
            :=  \{z  : z \in \symreg \text{ and }  z_\theta \in \symreg\}. \label{def:sym-part}
        \end{align}
        The \emph{$\theta$-symmetric cover} of $\symreg$, denoted by $\supreg[\theta]$, is the \emph{the smallest supregion} of $\symreg$ that is symmetric about the $\theta$-axis:
        \begin{align}
        \supreg[\theta]
            := \{z : z \in \symreg\text{ or } z_\theta \in \symreg\}. \label{def:sym-cover}
        \end{align} 
        The region $\symreg$ is said to be \emph{$\theta$-symmetric} if $\infreg[\theta] = \symreg = \supreg[\theta]$.
    \end{defn}
    \begin{figure}
        \centering
        \begin{minipage}[c]{0.53\columnwidth}
            \centering
            \adjincludegraphics[Clip = {0.15\width} {0} {0} {.18\height}, width=\linewidth]{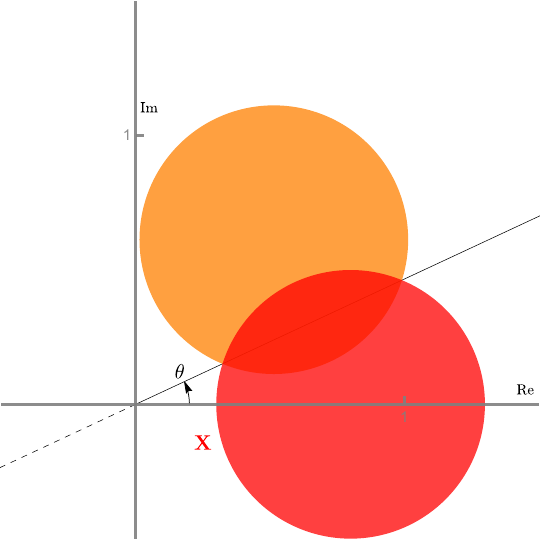}
        \end{minipage}
        \hfill
        \begin{minipage}[c]{0.45\columnwidth}
            \centering
            \fbox{\adjincludegraphics[Clip = {0.15\width} {0} {0} {.45\height},width=.7\linewidth]{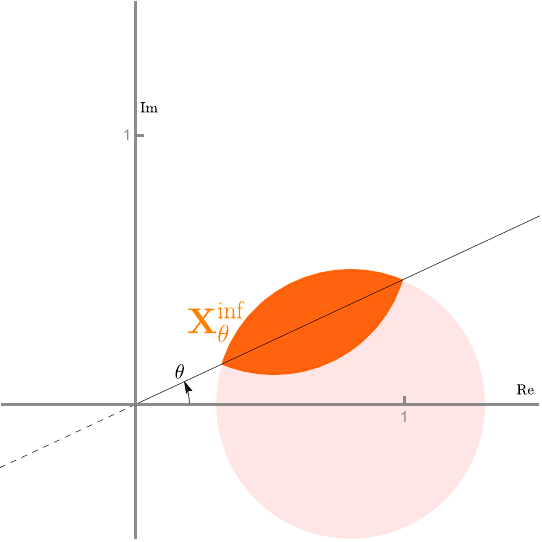}}
            
            \vspace{.2em} 
            
            \fbox{\adjincludegraphics[Clip = {.15\width} {0} {0} {.18\height}, width=.7\linewidth]{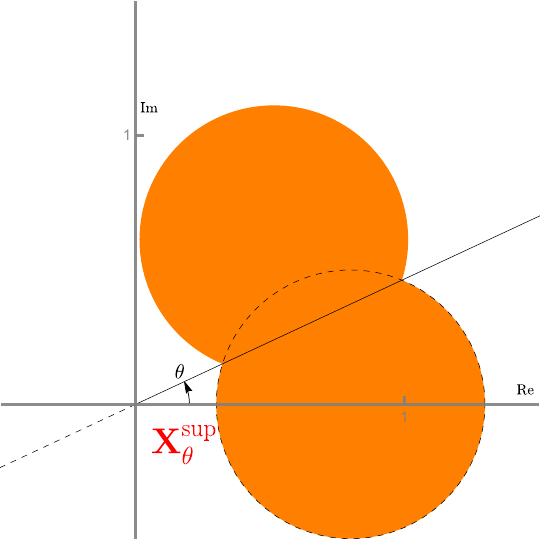}}
        \end{minipage}
        \caption{The $\theta$-symmetric part and cover of a region $\symreg$ (red). \label{fig:sym-inf-sup}} \vspace{-3em}
    \end{figure}
We consider a general uncertainty set induced by a general $\symreg$ as:
\begin{align}
    \symset :=& \{\Delta \in \cF^{n\times n}: \exists\, \vartheta \in \rF \nonumber \\
    &\hspace{6em} \text{ such that } \srg[\vartheta](\Delta) \subseteq \symreg\}. 
    \label{def:general-uncert}
\end{align}
The above definition assumes that there is no prior knowledge about $\vartheta$ but only a region in the SRG plane that results from SRG variants of the uncertainty.

The dependence on the existence of a non-prescribed $\vartheta$ in defining the uncertainty set $\symset$ may appear awkward in practice. 
We can also fix a $\theta\in\rF$, then the $\theta$-uncertainty set induced by the region $\symreg$ is defined as
\begin{align}
    \label{def:theta-uncert}
    \symset_\theta := \{\Delta \in \cF^{n\times n} : \srg[\theta](\Delta) \subseteq \symreg\}.
\end{align}
Recall that the $\theta$-SRG is inherently $\theta$-symmetric. 
    Thus, only the symmetric part $\infreg[\theta]$ acts as an active region bound in (\ref{def:theta-uncert}), that is 
        $\symset_\theta = \infreg*[\theta] := \{\Delta : \srg[\theta](\Delta) \subseteq \infreg[\theta]\}$. 
With $\theta$-uncertainty sets, we can also reformulate (\ref{def:general-uncert}) as  $\symset = \cup_{\theta \in \rF}\symset_\theta$.
We shall next consider the MRN problem with respect to both $\symset$ and $\symset_\theta$, and also discuss when these two problems (sets) coincide.
Denote the smallest and largest magnitudes among all points in $\symreg$ by:
        \begin{align*}
            \gmin(\symreg) := \inf \{|z| : z \in \symreg\},\quad
            \gmax(\symreg) := \sup \{|z| : z \in \symreg\}.
        \end{align*}
Now we provide a precise characterisation of the union of DW shells with $\Delta$ ranging over $\symset$ and $\symset_\theta$, respectively.
\begin{prop} \label{prop:dw-union-general}
    Given a closed region $\symreg$ in the SRG plane and $\theta\in \rF$.
    The following equalities hold when the matrix dimension is at least 2:
    \begin{align*}
        \begin{array}{@{}l@{\,}l@{}c@{}l}
            \dwshell(\symset_\theta) &=& 
                \displaystyle\bigcup_{\gmin(\infreg[\theta])\leqsl \gamma\leqsl \gmax(\infreg[\theta])}&
                \left(\chordhull((\disc[\gamma] \cap \infreg[\theta]))\times \{\gamma^2\} \vphantom{\chordhull(\{\disc[\gamma] \cap \symreg_\theta\})}\right), \\
            \dwshell(\symset) &=&
                \displaystyle\bigcup_{\gmin(\symreg)\leqsl \gamma\leqsl \gmax(\symreg)}
                &\left((\convhull((\disc[\gamma] \cap \symreg))) \times \{\gamma^2\}\right).
        \end{array}    
    \end{align*}
    When the matrix dimension is $1$ (i.e., in the scalar case), the left-hand sides above degenerate to the intersections of their corresponding right-hand sides with the surface $\parab[1]$.
\end{prop}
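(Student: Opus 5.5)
The plan is to prove the $\theta$-formula first and then obtain the formula for $\symset$ by taking a union over $\vartheta$. Throughout I use the slicing of the DW space by the horizontal planes $\plane[\gamma^2]$, together with one elementary property of $\proj[\theta]$. For $(z,\gamma^2)\in\epi(\parab[1])$, the pair $\proj[\theta](z,\gamma^2)$ consists of the two $\theta$-conjugate points $w,w_\theta$ on the circle $|w|=\gamma$ whose component along the $\theta$-axis equals $\tRe(e^{-\ii*\theta}z)$. Hence $z$ lies on the $\theta$-chord joining $w$ and $w_\theta$. I will also use that $\dwshell(\Delta)$ is convex for $n\geqsl 2$ (Au-Yeung–Poon), that $\proj[\theta]\dwshell(\Delta)=\srg[\theta](\Delta)$, and that $\symset_\theta=\infreg*[\theta]$. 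The scalar case is immediate: $\dwshell(\delta)=\{(\delta,|\delta|^2)\}\subseteq\parab[1]$ and $\srg[\theta](\delta)=\{\delta,\delta_\theta\}$. So a scalar $\delta$ is admissible iff $\delta\in\infreg[\theta]$, and this is exactly the intersection of the right-hand side with $\parab[1]$.

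\emph{Inclusion ``$\subseteq$''.} Take $\Delta\in\symset_\theta$ and a point $(z,\gamma^2)\in\dwshell(\Delta)$. Its image under $\proj[\theta]$ is a pair $w,w_\theta\in\srg[\theta](\Delta)\subseteq\infreg[\theta]$ with $|w|=\gamma$. Thus $w,w_\theta\in\disc[\gamma]\cap\infreg[\theta]$, and $z$ is a convex combination of $w$ and $w_\theta$, i.e., $z\in\chordhull((\disc[\gamma]\cap\infreg[\theta]))$. Moreover $\gamma=|w|$ lies between $\gmin(\infreg[\theta])$ and $\gmax(\infreg[\theta])$. This direction therefore needs no convexity.

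\emph{Inclusion ``$\supseteq$''.} This is where the real work lies, and I expect it to be the main obstacle. Fix $\gamma$ in the stated range and $z=\lambda a+(1-\lambda)a_\theta$ with $a,a_\theta\in\disc[\gamma]\cap\infreg[\theta]$. I must exhibit $\Delta\in\cF^{n\times n}$ with $\srg[\theta](\Delta)\subseteq\infreg[\theta]$ and $(z,\gamma^2)\in\dwshell(\Delta)$. The difficulty is that $|a|$ may be strictly smaller than $\gamma$, so a normal matrix built from $a$ and $a_\theta$ alone has its shell at the wrong height $|a|^2$.

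My first attempt uses a point $b\in\infreg[\theta]$ with $|b|=\gamma$, which exists by the choice of the range and closedness of the region, provided $\infreg[\theta]$ meets the circle of radius $\gamma$. I then take a normal $\Delta=\mathrm{diag}(a,a_\theta,b,b_\theta,\dots)$, padding with repeated eigenvalues when $n=2$ and working in the direct-sum sense for larger $n$. Its DW shell is the convex hull of the lifted eigenvalues $(\mu,|\mu|^2)$. Its $\theta$-SRG is the union of the $\proj[\theta]$-images of the segments between these lifted points, and I will check that these images stay inside $\infreg[\theta]$. This check should use the symmetry about the $\theta$-axis together with the monotone dependence of $\proj[\theta]$ on height along $\theta$-chords. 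If that fails, the fallback is a non-normal $2\times 2$ block. In that case the shell is an ellipsoid whose projection is a controllable arc family, and the height is tuned by the off-diagonal entry, as in the sectorial constructions behind \cref{prop:ga-co-sec-char}. Any point of the right-hand side is then realised, and the dimension is padded by direct sums, using that the shell of a direct sum is the convex hull of the shells of its blocks.

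For $\dwshell(\symset)$, I write $\symset=\cup_\vartheta\symset_\vartheta$, so $\dwshell(\symset)=\cup_\vartheta\dwshell(\symset_\vartheta)$, and apply the first formula for each $\vartheta$. The claim then reduces to the planar identity $\cup_\vartheta\chordhull[\vartheta]((\disc[\gamma]\cap\infreg[\vartheta]))=\convhull((\disc[\gamma]\cap\symreg))$, with $\gamma$ ranging over $[\gmin(\symreg),\gmax(\symreg)]$. For ``$\subseteq$'' I note that $\infreg[\vartheta]\subseteq\symreg$ and that each chordal hull is contained in the convex hull. For ``$\supseteq$'' I take a segment between $w,s\in\disc[\gamma]\cap\symreg$. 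When $|w|=|s|$, I choose $\vartheta$ as the direction of $w+s$ so that $s=w_\vartheta$. Otherwise I lift to the shell level and rerun the ``$\supseteq$'' construction with the two points taken from $\symreg$ itself rather than from its symmetric part, choosing $\vartheta$ to bisect their arguments.
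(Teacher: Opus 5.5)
\textbf{There is a gap in your ``$\supseteq$'' step.} It comes from reading $\disc[\gamma]\cap\infreg[\theta]$ as an intersection with the closed disc, so that chord endpoints $a$ with $|a|<\gamma$ are allowed. Under that reading the inclusion you set out to prove is false. Neither the diagonal construction nor the non-normal fallback can therefore be completed.

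Your own ``$\subseteq$'' argument exhibits the obstruction. If $(z,\gamma^2)\in\dwshell(\Delta)$, then $\proj[\theta](z,\gamma^2)\subseteq\srg[\theta](\Delta)\subseteq\infreg[\theta]$. So the two points of modulus \emph{exactly} $\gamma$ that share the $\theta$-component of $z$ must lie in $\infreg[\theta]$.

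A concrete counterexample: take $\theta=0$, $\symreg=\{1,2\ii*,-2\ii*\}$ (or a small symmetric thickening of it), $\gamma=2$ and $a=1$. The point $(1,4)$ lies in the closed-disc version of the right-hand side. Yet $\proj[0](1,4)=\{1\pm\ii*\sqrt{3}\}\not\subseteq\symreg$, so $(1,4)\notin\dwshell(\symset_0)$.

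Your specific $\Delta$ fails in two places:
\begin{itemize}
\item The top slice of $\dwshell(\Delta)$ at height $\gamma^2$ is only $\lineseg(b,b_\theta)\times\{\gamma^2\}$. This misses $(z,\gamma^2)$ unless $z\in\lineseg(b,b_\theta)$.
\item The slanted edges from $(a,|a|^2)$ to $(b,\gamma^2)$ project to points outside the region. In the example, the midpoint $(0.5+\ii*, 2.5)$ projects to $0.5\pm1.5\ii*$.
\end{itemize}
The same defect recurs in your treatment of $\symset$: the case $|w|\neq|s|$ is precisely where the identity you propose breaks.

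\textbf{The fix is to read the cross sections as circles.} The intended meaning is the circle $\{|w|=\gamma\}$, not the disc. This is consistent with the paper's wording ``circular cross section'', its circle-wise angles, and the step $\proj[\theta](\dwshell(\Delta)\cap\plane[\gamma^2])=\disc[\gamma]\cap\srg[\theta](\Delta)$ in its proof. With that reading there is no obstacle.

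For $\symset_\theta$: take $a\in\infreg[\theta]$ with $|a|=\gamma$ and $z\in\lineseg(a,a_\theta)$. Let $\Delta=\blkdiag{a\mati_m, a_\theta\mati_{n-m}}$ with $1\leqsl m\leqsl n-1$. Then $\Delta\ct\Delta=\gamma^2\mati$, so $\dwshell(\Delta)=\lineseg(a,a_\theta)\times\{\gamma^2\}$. Since this chord is orthogonal to the $\theta$-axis, $\srg[\theta](\Delta)=\{a,a_\theta\}\subseteq\infreg[\theta]$.

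For $\symset$: two points $w,s\in\symreg$ on the same circle satisfy $s=w_\vartheta$, where $\vartheta$ is the direction of $w+s$ (or any direction orthogonal to $w$ if $s=-w$). Then $\blkdiag{w\mati_m,s\mati_{n-m}}$, a scalar multiple of a unitary, does the job. This is exactly the paper's proof.

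Your ``$\subseteq$'' direction, the scalar case, and the reduction $\dwshell(\symset)=\cup_\vartheta\dwshell(\symset_\vartheta)$ are correct and match the paper. One minor point: the DW shell is convex only for $n\geqsl 3$, not $n\geqsl 2$, but no convexity is needed here.
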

\begin{proof}
    See Appendix~\ref{pf:dw-union-general}.
\end{proof}

With the union of DW shells of uncertainties of interest precisely characterised, our next results identify situations where \cref{corol:mrn-dw} reduces to a condition on the SRG plane with necessity.
\begin{theorem}[{Matrix $\theta$-Uncertainty}] \label{thm:mat-theta}
    Given a $\theta\in\rF$ and a closed $\theta$-symmetric region $\symreg$ in the SRG plane, 
        the MRN holds for $G$ with respect to the corresponding uncertainty set $\symset_\theta$ 
            if and only if 
        $\invsrg[\theta](-G)\,\cap\,\symreg = \emptyset$.
\end{theorem}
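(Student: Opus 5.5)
The plan is to reduce the statement to Corollary~\ref{corol:mrn-dw} and then push the resulting three-dimensional separation down to the SRG plane through the projection $\proj[\theta]$. The key inputs are the characterisation of $\dwshell(\symset_\theta)$ in Proposition~\ref{prop:dw-union-general} and the $\theta$-symmetry of $\symreg$.

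\textbf{Step 1: reduce to a DW separation.} Since $\symreg$ is $\theta$-symmetric, we have $\infreg[\theta]=\symreg$. The set $\symset_\theta$ is unitarily invariant, because $\theta$-SRGs are invariant under unitary similarity. Corollary~\ref{corol:mrn-dw} therefore says that MRN holds for $G$ if and only if
\[
\invdwshell(-G)\cap \dwshell(\symset_\theta)=\emptyset .
\]
By Proposition~\ref{prop:dw-union-general}, in dimension $n\geqsl 2$,
\[
\dwshell(\symset_\theta)=\bigcup_{\gamma}\big(\chordhull((\disc[\gamma]\cap\symreg))\times\{\gamma^2\}\big).
\]
The scalar case $n=1$ is handled by the same argument restricted to $\parab[1]$, or directly: for scalars, $\invsrg[\theta](-g)$ is the pair $\{-1/g,\,(-1/g)_\theta\}$, and $\Delta$ ranges over $\symreg$.

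\textbf{Step 2: express membership in $\dwshell(\symset_\theta)$ via $\proj[\theta]$.} The aim is to show that, for a point $(z,\nu)\in\epi(\parab[1])$,
\[
(z,\nu)\in\dwshell(\symset_\theta) \iff \proj[\theta](z,\nu)\subseteq\symreg .
\]
Set $\gamma=\sqrt{\nu}$. The point $\proj[\theta](z,\nu)$ is the $\theta$-conjugate pair $w,w_\theta$ on the circle $|w|=\gamma$ whose component along the $\theta$-axis equals $\tRe(e^{-\ii*\theta}z)$. In the slice at height $\gamma^2$, each $\theta$-chord of $\disc[\gamma]\cap\symreg$ is the segment joining such a pair $w,w_\theta$, both lying on the circle of radius $\gamma$, since this is where the chordal hull receives its endpoints. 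A point $z$ lies on this chord exactly when its projection along the direction orthogonal to the $\theta$-axis lands on $w$ and $w_\theta$. This is the step I expect to be the main obstacle: one must check that the chordal hull of $\disc[\gamma]\cap\symreg$ is generated only by pairs on the boundary circle $|w|=\gamma$, so that points strictly inside the disc do not contribute extra chords. I would argue this from the construction in Appendix~\ref{pf:dw-union-general}: the slice at height $\gamma^2$ collects DW points of matrices whose SRG points have modulus $\gamma$, and $\theta$-symmetry ensures both $w$ and $w_\theta$ lie in $\symreg$ simultaneously. If interior points did contribute, I would instead use the fact that $\symreg\cap\{|w|=\gamma\}$ is what the union over $\gamma$ effectively captures.

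\textbf{Step 3: conclude.} Applying $\proj[\theta]$ pointwise to $\invdwshell(-G)$ gives $\invsrg[\theta](-G)$ by definition. Combining this with Step~2, $\invdwshell(-G)$ meets $\dwshell(\symset_\theta)$ if and only if some projected pair lies in $\symreg$. Because projected pairs are $\theta$-conjugate and $\symreg$ is $\theta$-symmetric, this happens exactly when $\invsrg[\theta](-G)\cap\symreg\neq\emptyset$. Together with Step~1, this yields the equivalence claimed in Theorem~\ref{thm:mat-theta}.
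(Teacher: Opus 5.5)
Your proposal is correct and follows essentially the paper's own proof. You reduce to $\invdwshell(-G)\cap\dwshell(\symset_\theta)=\emptyset$ via \cref{corol:mrn-dw}, identify $\dwshell(\symset_\theta)$ with $\proj[\theta]\inv\symreg$ using \cref{prop:dw-union-general} and $\theta$-symmetry, and then project, with $z\in\symreg\Rightarrow z_\theta\in\symreg$ supplying the converse. The obstacle you flag in Step~2 dissolves: the sections $\disc[\gamma]\cap\symreg$ there must be read as the circle $|w|=\gamma$ (the paper calls them circular cross sections in \cref{defn:theta-cnt}), and in any case $(z,\nu)\in\dwshell(\symset_\theta)\Rightarrow\proj[\theta](z,\nu)\subseteq\symreg$ follows directly from $\proj[\theta]\dwshell(\Delta)=\srg[\theta](\Delta)\subseteq\symreg$ for each $\Delta\in\symset_\theta$, while the reverse implication only needs circle pairs $w,w_\theta\in\symreg$ and the matrix $\blkdiag{w\mati_m, w_\theta\mati_{n-m}}$.
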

\begin{proof}   
    See Appendix~\ref{pf:theta-uncert}.
\end{proof}
\begin{remark}
As mentioned, it generally holds that $\symset_\theta = \infreg*[\theta]$, where $\infreg*[\theta]$ is induced by the $\theta$-symmetric region $\infreg[\theta]$. 
    Therefore, regardless of whether the given region $\symreg$ is $\theta$-symmetric, the necessary and sufficient condition for MRN with respect to $\symset_\theta$ is the separation between $\invsrg[\theta](-G)$ and $\infreg[\theta]$.  
On the other hand, in practice it may be safer to consider the overestimate $\supreg*[\theta]$ of the uncertainty. For example, the prescribed region might be directly from data and any asymmetry may arise from numerical errors or data infidelity. Due to the $\theta$-symmetry of $\invsrg[\theta](-G)$ (refer to \cite[Tab.~III]{zhangPhantomDavisWielandtShell2025a}), the condition $\invsrg[\theta](-G)\cap \symreg$ is equivalent to $\invsrg[\theta](-G) \cap \supreg[\theta] = \emptyset$, the latter being the necessary and sufficient condition for MRN with respect to $\supreg*[\theta]$. 
These relationships are summarised in \cref{fig:theta-uncert-conditions}. All the conditions coincide if and only if $\symreg$ is $\theta$-symmetric.
    \begin{figure}
    \centering
    \begin{tikzpicture}[
        gbox/.style={
            draw,
            rectangle,
            rounded corners,
            align=center,
            minimum width=4cm,
            minimum height=.8cm
        },
        rbox/.style={
            draw,
            rectangle,
            rounded corners,
            align=center,
            minimum width=3.2cm,
            minimum height=1.0cm
        },
        arrow/.style={->, double, double distance = 2pt}
    ]
        \node (ghead) at (0,0) {\textbf{SRG Conditions}};
        \node (rhead) at (4.4,0) {\textbf{MRN Properties}};

        \node (g1) [gbox, below=0.8cm of ghead.center, anchor = center]  {$\invsrg[\theta](-G)\cap\supreg[\theta] = \emptyset$};
        \node (g2) [gbox, below=2.3cm of ghead.center, anchor = center] {$\invsrg[\theta](-G)\cap\symreg = \emptyset$};
        \node (g3) [gbox, below=3.8cm of ghead.center, anchor = center] {$\invsrg[\theta](-G)\cap\infreg[\theta] = \emptyset$};

        \node (r1) [rbox, below=0.8cm of rhead.center, anchor = center] {MRN holds for $G$ \\ w.r.t. $\supreg*[\theta]$};
        \node (r2) [rbox, below=2.3cm of rhead.center, anchor = center] {MRN holds for $G$ \\ w.r.t. $\symset_\theta$};
        \node (r3) [rbox, below=3.8cm of rhead.center, anchor = center] {MRN holds for $G$ \\ w.r.t. $\infreg*[\theta]$};

        \node at ($(g1)!0.55!(r1)$) {$\Longleftrightarrow$};
        \node at ($(g3)!0.55!(r3)$) {$\Longleftrightarrow$};

        \node at ($(g1)!0.5!(g2)$) {$\Updownarrow$};
        \node at ($(g2)!0.5!(g3)$) {$\Downarrow$};
        \node at ($(r1)!0.5!(r2)$) {$\Downarrow$};
        \node at ($(r2)!0.5!(r3)$) {$\Updownarrow$};
    \end{tikzpicture}
    \caption{Relationships among various SRG conditions and MRN properties with respect to $\theta$-uncertainty sets.\label{fig:theta-uncert-conditions}}
    \vspace{-2em}
    \end{figure}
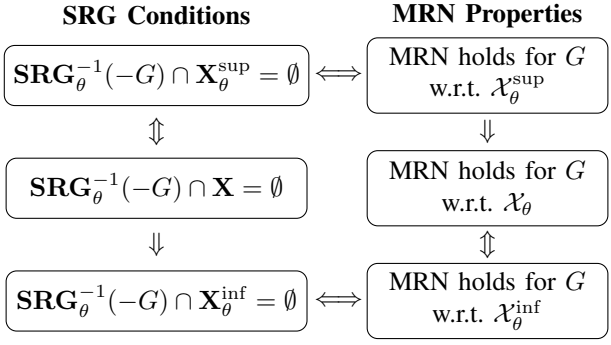
\end{remark}

By replacing $\symreg$ with $\disc[\gamma],\cone*[\alpha,\beta]$, and $\sector[\gamma][\alpha,\beta]$, and setting $\theta = 0$ for the disc case and $\theta = (\alpha+\beta)/2$ for the conic and sectorial cases,  
the equivalences between items 1) and 3) in our previous results, Corollaries \ref{corol:ma-small-gain}-\ref{corol:ma-sect}, follow directly from \cref{thm:mat-theta}. The equivalences with item 4) follow from a straightforward pointwise argument by using \cite[Prop.~2 and Tab.~III]{zhangPhantomDavisWielandtShell2025a} together with the closedness of the disc, conic, and sectorial regions.

Next, we move on to the case of general uncertainties as defined in \cref{def:general-uncert}. Before showing the main results, we introduce the notion of $\theta$-circular connectedness and the associated circular hull operation that will streamline our statements. 
    \begin{defn}[{$\theta$}-circular connectedness] \label{defn:theta-cnt}
        Given $\theta\in \rF$, a region $\region\subseteq\cF$ is said to be \emph{$\theta$-circularly connected} if, for each $\gamma\in [\gmin(\region),\gmax(\region)]$, the intersection of its circular cross section $\disc[\gamma]\cap \region$ with each of the two closed half-planes separated by the $\theta$-axis is either empty or \emph{connected}.
    \end{defn}

    As an example, the broken-heart-shaped region $\region$ in \cref{fig:theta-cir} is $0$-circularly connected but not $\theta$-circularly connected for the value of $\theta$ indicated in the figure.

For a closed region $\region \subseteq \cF$, a $\theta\in\rF$, and some $\gamma\in [\gmin(\region),\gmax(\region)]$, the circle-wise $\theta$-angle of $\region$ are defined as
\begin{align*}
    \angmin[\theta](\region; \gamma) &:= \inf\,\{|\angle z - \theta| : z\in \region \cap \disc[\gamma]\};\\
    \angmax[\theta](\region; \gamma) &:= \sup\,\{|\angle z - \theta| : z\in \region \cap \disc[\gamma]\},
\end{align*} where $\angle z$ takes value in $[\theta-\pi,\theta+\pi)$.
Under this definition, the previously defined $\theta$-angle of an operator can be rewritten as $\angmax[\theta](G) = \sup_{\gamma \in [\svmin(G),\svmax(G)]}\angmax[\theta](\srg[\theta](G); \gamma)$.
The $\theta$-circular hull of $\region$, denoted by $\circhull(\region)$, is then defined as:
\begin{align*}
    \bigcup_{\gmin(\region)\leqsl\gamma\leqsl\gmax(\region)} \{\gamma e^{\ii* (\theta+\delta) }: |\delta| \in [\angmin[\theta](\region;\gamma),\angmax[\theta](\region;\gamma)]\}.
\end{align*}

The following observation is straightforward from definitions:
\begin{obs} \label{obs:circhull}
    Given a $\theta\in\rF$,
    a closed region $\region$ in the SRG plane is $\theta$-symmetric and $\theta$-circularly connected if and only if $\region = \circhull(\region)$.
    If either $\theta$-symmetry or $\theta$-circularly connectedness is absent, $\region$ is a \emph{proper} subset of $\circhull(\region)$.
\end{obs}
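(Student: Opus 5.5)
The approach is to reduce everything to the circle-wise structure of $\region$. Fix $\gamma\in[\gmin(\region),\gmax(\region)]$ and write $C_\gamma := \region\cap\{|z|=\gamma\}$. Here I read $\disc[\gamma]\cap\region$ in the definitions of circular connectedness and of $\angmin[\theta],\angmax[\theta]$ as this circular cross section, as the phrase ``circular cross section'' indicates. By definition, the slice of $\circhull(\region)$ on the circle of radius $\gamma$ is
\begin{align*}
A_\gamma := \{\gamma e^{\ii*(\theta+\delta)} : |\delta|\in[\angmin[\theta](\region;\gamma),\angmax[\theta](\region;\gamma)]\}.
\end{align*}
This is the union of two closed arcs, mirror images of each other about the $\theta$-axis: one in each closed half-plane bounded by the $\theta$-axis. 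Both the identity $\region=\circhull(\region)$ and the two properties are statements about each $C_\gamma$ separately. So the whole argument reduces to a circle-by-circle comparison of $C_\gamma$ with $A_\gamma$.

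\textbf{First step: $\region\subseteq\circhull(\region)$ always.} For $z\in C_\gamma$ with $\delta=\angle z-\theta\in[-\pi,\pi)$, the quantity $|\delta|$ lies between the infimum and the supremum defining $\angmin[\theta](\region;\gamma)$ and $\angmax[\theta](\region;\gamma)$. Hence $z\in A_\gamma$.

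\textbf{Second step: ``if''.} Suppose $\region$ is $\theta$-symmetric and $\theta$-circularly connected. By symmetry, $C_\gamma$ is mirror symmetric about the $\theta$-axis. Its part in the upper closed half-plane, i.e.\ $\delta\in[0,\pi]$, is a connected closed subset of a half-circle, hence a closed arc $\{\delta\in[a,b]\}$. Closedness of $\region$ makes $C_\gamma$ compact, so $a=\angmin[\theta](\region;\gamma)$ and $b=\angmax[\theta](\region;\gamma)$ are attained. Reflecting gives the lower arc. Therefore $C_\gamma=A_\gamma$ for every $\gamma$, and so $\region=\circhull(\region)$.

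\textbf{Third step: ``only if''.} Since $\circhull(\region)$ is built as a union of slices $A_\gamma$, it is $\theta$-symmetric. Each $A_\gamma$ meets each closed half-plane in a single arc, which is connected. Hence $\circhull(\region)$ is $\theta$-symmetric and $\theta$-circularly connected, and these properties transfer to $\region$ when the two sets are equal.

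\textbf{Final step: properness.} If either property fails, then $\region\neq\circhull(\region)$ by the ``only if'' direction. Combined with the inclusion of the first step, this makes the inclusion proper.

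The main obstacle is the bookkeeping at the angular endpoints. The convention $\angle z\in[\theta-\pi,\theta+\pi)$ places the negative $\theta$-axis at $\delta=-\pi$ and the positive one at $\delta=0$, and these points belong to both closed half-planes. One must check that the half-plane intersection is an arc $[a,b]$ in terms of $|\delta|$, including the degenerate cases $a=0$ and $b=\pi$, where the two arcs of $A_\gamma$ merge. It is also worth noting that the radii $\gamma$ with $C_\gamma=\emptyset$ need care. For such $\gamma$ the infimum and supremum are taken over an empty set, so $A_\gamma$ must be interpreted as empty (via $\inf\emptyset=+\infty$). I will state this convention explicitly; otherwise the claim can fail for regions whose set of radii $\{|z|:z\in\region\}$ has gaps.
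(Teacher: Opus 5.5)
Your argument is correct and is the circle-by-circle unpacking of the definitions that the paper has in mind when it calls this observation ``straightforward from definitions''; the paper itself gives no proof. The structure is: $\region\subseteq\circhull(\region)$ always; closedness plus connectedness turn each half-plane slice into a closed arc for the ``if'' part; and the built-in symmetry and arc structure of $\circhull(\region)$ give the ``only if'' and properness parts. Your reading of $\disc[\gamma]\cap\region$ as the circle slice, and your convention that empty slices contribute nothing, are what the paper tacitly uses (e.g.\ in the appendix, where gap points are chosen with $|z|=\gamma$), and both are needed for the statement to hold.
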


\begin{figure}
    \centering
    \adjincludegraphics[Clip={.15\width} {.14\height} {.05\width} {.04\height}, height=4cm]{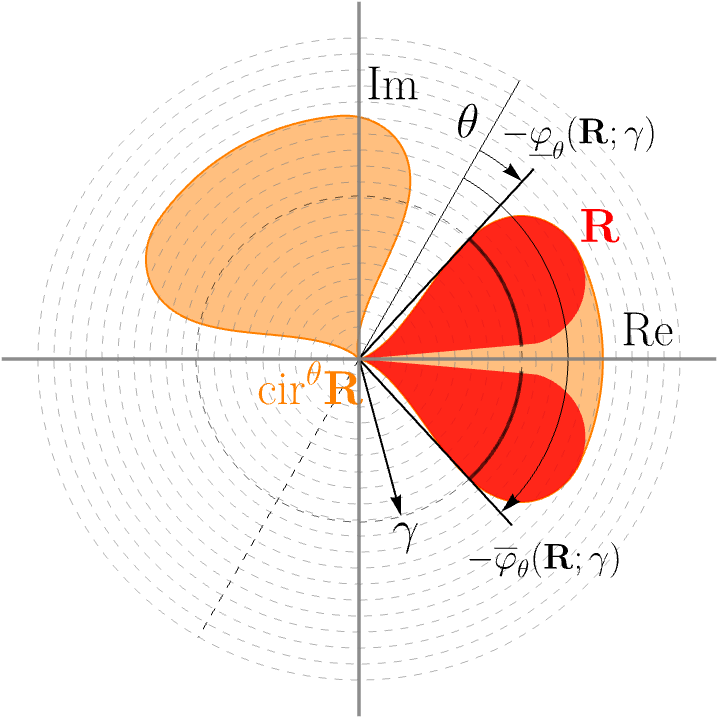}
    \caption{The $\theta$-circular hull (orange) of a broken heart $\region$ (red). \label{fig:theta-cir}}\vspace{-3em}
\end{figure}

In the case of general uncertainties, separation from the $\theta$-circular hull of the prescribed region provides %
a general sufficient condition. Meanwhile, $\theta$-symmetry and $\theta$-circular connectedness are key properties that ensure the separation is also universally necessary. 
\begin{theorem}[{Matrix General Uncertainty}] \label{thm:mat-general}
    Let the matrices under consideration be of size at least 2. Given a closed region $\symreg$ in the SRG plane and an arbitrary $\theta \in \rF$,
        the MRN holds for $G$ with respect to the uncertainty set $\symset$ 
            if
        $\invsrg[\theta](-G)\,\cap\,\circhull(\symreg) = \emptyset$.
    The graphical condition becomes necessary if and only if $\symreg$ is $\theta$-symmetric and $\theta$-circularly connected.
\end{theorem}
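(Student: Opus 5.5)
The plan is to deduce sufficiency from \cref{thm:mat-theta} applied to the region $\circhull(\symreg)$, after comparing DW-shell unions slice by slice via \cref{prop:dw-union-general}. Necessity will come from \cref{thm:mat-theta} together with \cref{obs:circhull}, and the failure of necessity from a scalar counterexample.

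\emph{Sufficiency.} Write $\region[Y]:=\circhull(\symreg)$. It is closed and $\theta$-symmetric by construction, and $\gmin(\region[Y])=\gmin(\symreg)$, $\gmax(\region[Y])=\gmax(\symreg)$. Let $\region[Y]_\theta$ denote the $\theta$-uncertainty set induced by $\region[Y]$ as in \eqref{def:theta-uncert}. By \cref{thm:mat-theta}, the hypothesis $\invsrg[\theta](-G)\cap\region[Y]=\emptyset$ is equivalent to MRN for $G$ w.r.t. $\region[Y]_\theta$. By \cref{corol:mrn-dw}, this is in turn equivalent to $\invdwshell(-G)\cap\dwshell(\region[Y]_\theta)=\emptyset$. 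It therefore suffices to show $\dwshell(\symset)\subseteq\dwshell(\region[Y]_\theta)$, and then apply \cref{corol:mrn-dw} once more to $\symset$, which is unitarily invariant.

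By \cref{prop:dw-union-general}, this inclusion reduces to a statement at each level $\nu=\gamma^2$: the convex hull of the cross-section of $\symreg$ at modulus $\gamma$ must lie in the $\theta$-chordal hull of the corresponding cross-section of $\region[Y]$. The cross-section of $\region[Y]$ at radius $\gamma$ is the symmetric pair of arcs $\{\gamma e^{\ii*(\theta+\delta)}:|\delta|\in[\angmin[\theta](\symreg;\gamma),\angmax[\theta](\symreg;\gamma)]\}$. Its $\theta$-chordal hull is the part of the disc of radius $\gamma$ lying between the two chords orthogonal to the $\theta$-axis at $|\delta|=\angmin[\theta]$ and $|\delta|=\angmax[\theta]$. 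This set is convex (a disc intersected with a slab), and it contains every point of the cross-section of $\symreg$. Hence it contains their convex hull.

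The main obstacle is this step. One must read the sets $\disc[\gamma]\cap\cdot$ in \cref{prop:dw-union-general} correctly as circular cross-sections. One must also handle the degenerate slices: $\gamma=0$, where the cross-section is just the origin, and $\angmax[\theta]=\pi$, where the slab touches the negative $\theta$-axis. In both cases the slab description still has to yield convexity.

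\emph{Necessity when $\symreg$ is symmetric and connected.} If $\symreg$ is $\theta$-symmetric and $\theta$-circularly connected, then $\circhull(\symreg)=\symreg$ by \cref{obs:circhull}. Since $\symset_\theta\subseteq\symset$, MRN w.r.t. $\symset$ implies MRN w.r.t. $\symset_\theta$. By \cref{thm:mat-theta}, this gives $\invsrg[\theta](-G)\cap\symreg=\emptyset$, which is the graphical condition.

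\emph{Failure of necessity otherwise.} Suppose either property fails. By \cref{obs:circhull} there is a point $w\in\circhull(\symreg)\setminus\symreg$; set $\gamma=|w|$. If $w=0$, then $\gmin(\symreg)=0$, and closedness of $\symreg$ would force $0\in\symreg$, a contradiction; so $w\neq0$. Take $G:=-w^{-1}\mati$, so that $(-G)^{-1}=w\mati$ and $\invdwshell(-G)=\dwshell(w\mati)=\{(w,|w|^2)\}$. The point $w$ lies on the circle of radius $\gamma$, so it is an extreme point of the disc of radius $\gamma$. Therefore $w$ belongs to the convex hull of the cross-section of $\symreg$ at radius $\gamma$ only if $w\in\symreg$, which is false. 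By \cref{prop:dw-union-general}, $(w,\gamma^2)\notin\dwshell(\symset)$, and \cref{corol:mrn-dw} then gives MRN for $G$ w.r.t. $\symset$. On the other hand, $\invsrg[\theta](-G)=\srg[\theta](w\mati)=\{w,w_\theta\}$ meets $\circhull(\symreg)$ at $w$, so the graphical condition fails. This shows that necessity holds exactly under the two stated properties.
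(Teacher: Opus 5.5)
Your proposal is correct and follows essentially the same route as the paper. Sufficiency comes from the slice-wise inclusion $\convhull((\disc[\gamma]\cap\symreg))\subseteq\chordhull((\disc[\gamma]\cap\circhull(\symreg)))$ via \cref{prop:dw-union-general} and DW separation, necessity from \cref{thm:mat-theta} once $\circhull(\symreg)=\symreg$, and the failure of necessity from a scalar matrix $G=-w^{-1}\mati$ with $w\in\circhull(\symreg)\setminus\symreg$. The differences are only in packaging: you justify the slice inclusion (a disc intersected with a slab) that the paper merely asserts, you get necessity directly from $\symset_\theta\subseteq\symset$ rather than from $\dwshell(\symset)=\dwshell(\symset_\theta)$, and you merge the paper's two counterexample cases into one.
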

\begin{proof}
    See Appendix~\ref{pf:general-uncert}.
\end{proof}

Comparing \cref{thm:mat-theta} and \cref{thm:mat-general}, 
    when the prescribed uncertainty region is both $\theta$-symmetric and $\theta$-circularly connected, 
    the corresponding graphical conditions on the SRG plane coincide and are both necessary and sufficient for MRN with respect to their associated uncertainty sets. 
    It is therefore not surprising that $\symset = \symset_\theta$ in this case.
\begin{prop} \label{prop:sym-set-char}
    Given a $\theta\in\rF$ and a region $\symreg$ in the SRG plane, 
        the $\theta$-uncertainty set $\symset_\theta$ is a subset of the uncertainty set $\symset$.
        When the matrices under consideration are of size at least 2, these two sets coincide if and only if $\symreg$ is $\theta$-symmetric and $\theta$-circularly connected.
\end{prop}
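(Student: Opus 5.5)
The inclusion $\symset_\theta\subseteq\symset$ is immediate from \eqref{def:general-uncert} and \eqref{def:theta-uncert}: any $\Delta$ with $\srg[\theta](\Delta)\subseteq\symreg$ satisfies the defining condition of $\symset$ with $\vartheta=\theta$. The substance is the equivalence for $n\geqsl 2$. My plan is to route both directions through the DW-shell characterisations of \cref{prop:dw-union-general} together with \cref{obs:circhull}, rather than working with matrices directly.

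\emph{Sufficiency.} Assume $\symreg$ is $\theta$-symmetric and $\theta$-circularly connected, so $\symreg=\circhull(\symreg)=\infreg[\theta]$ by \cref{obs:circhull}. Take $\Delta\in\symset$ with $\srg[\vartheta](\Delta)\subseteq\symreg$ for some $\vartheta$. I will show $\srg[\theta](\Delta)\subseteq\symreg$ pointwise. Fix $u\neq 0$ and let $(z,\nu)$ be the corresponding DW-shell point, with $\gamma=\sqrt{\nu}$. By \cref{prop:dw-union-general}, $(z,\nu)\in\dwshell(\symset)$, so $z\in\convhull(\disc[\gamma]\cap\symreg)$; the points of $\disc[\gamma]\cap\symreg$ that matter lie on the circle of radius $\gamma$ (the DW-shell point projects to points of modulus $\gamma$). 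The set $\{w: |w|=\gamma\}\cap\symreg$ is $\theta$-symmetric, and on each side of the $\theta$-axis it is an arc $\{\gamma e^{\ii*(\theta\pm\delta)}:\delta\in[\angmin[\theta](\symreg;\gamma),\angmax[\theta](\symreg;\gamma)]\}$. I then need a planar geometric lemma: for such a symmetric union of two arcs $A$, any $z\in\convhull(A)$ with $|z|\leqsl\gamma$ has its $\proj[\theta]$-image $\tRe(e^{-\ii*\theta}z)\pm\ii*\sqrt{\gamma^2-\tRe(e^{-\ii*\theta}z)^2}$ rotated back lying in $A$. Geometrically, the convex hull of $A$ is bounded by the two arcs and the chords joining them; its projection along the direction orthogonal to the $\theta$-axis, lifted to the circle, is exactly the range of $\theta$-abscissae covered by $A$, which is an interval by connectedness of each arc. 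Hence $\proj[\theta](z,\nu)\subseteq\symreg$, giving $\Delta\in\symset_\theta$.

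\emph{Necessity.} Suppose $\symset=\symset_\theta$ but $\symreg$ fails $\theta$-symmetry or $\theta$-circular connectedness. By \cref{obs:circhull}, $\symreg\subsetneq\circhull(\symreg)$. I will compare DW unions: $\dwshell(\symset_\theta)$ is built from $\chordhull(\disc[\gamma]\cap\infreg[\theta])$, while $\dwshell(\symset)$ is built from $\convhull(\disc[\gamma]\cap\symreg)$. I aim to exhibit a level $\gamma$ and a point $z$ in the second set but not the first. If symmetry fails, there is $w\in\symreg$ with $w_\theta\notin\symreg$; for $n\geqsl2$ the scalar-type matrix realising $(w,|w|^2)$ (e.g.\ $\Delta=w\mati$, whose SRG is $\{w,\cj(w)\}$ and $\vartheta$-SRG is $\{w\}$ for $\vartheta=\angle w$) lies in $\symset$; projecting by $\proj[\theta]$ gives $\{w,w_\theta\}\not\subseteq\symreg$, so $\Delta\notin\symset_\theta$. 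Any closedness or degenerate-angle issues are handled by choosing $\vartheta$ so that $\srg[\vartheta](w\mati)=\{w\}$. If symmetry holds but circular connectedness fails, there are two points $\gamma e^{\ii*(\theta+\delta_1)},\gamma e^{\ii*(\theta+\delta_2)}\in\symreg$ on the same side with an intermediate angle $\delta$ missing. A $2\times 2$ normal matrix with these eigenvalues has its DW shell equal to the segment joining the two lifted points. That segment is contained in $\dwshell(\symset)$ via a suitable $\vartheta$ (the bisector direction, so that its $\vartheta$-SRG is the short chordal arc on the circle inside $\symreg$). Its $\proj[\theta]$-image, however, sweeps the circle arc between $\delta_1$ and $\delta_2$, which hits the gap. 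So this $\Delta\in\symset\setminus\symset_\theta$.

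The main obstacle is making the necessity constructions rigorous. One has to ensure that the chosen $\vartheta$ really certifies $\srg[\vartheta](\Delta)\subseteq\symreg$. This is delicate in the disconnected case, since the $\vartheta$-SRG of a normal $2\times2$ matrix is an arc of a circle through the eigenvalues, and it may leave $\symreg$ unless $\vartheta$ is aligned with the chord so that the arc collapses to the two endpoints. I would first try $\vartheta=\theta+(\delta_1+\delta_2)/2+\pi/2$, making the chord orthogonal to the $\vartheta$-axis so that the $\vartheta$-conjugate pair is exactly the two eigenvalues.
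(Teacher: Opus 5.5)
Your proposal follows the paper's proof. The inclusion is by definition. The ``if'' direction uses the layerwise DW-union formula of \cref{prop:dw-union-general} followed by $\proj[\theta]$; your $\theta$-abscissa (linear-functional) argument is a direct way of seeing the paper's identity that the convex and $\theta$-chordal hulls of each circular cross section coincide, i.e.\ $\dwshell(\symset)=\dwshell(\symset_\theta)$. The ``only if'' direction uses the same two witnesses: $w\mati_n$, and a normal matrix with two equal-modulus eigenvalues taken from disconnected arcs on one side of the $\theta$-axis.

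There is one slip to fix. The certifying angle must be the bisector $\vartheta=\theta+(\delta_1+\delta_2)/2$, not $\theta+(\delta_1+\delta_2)/2+\pi/2$. Your formula makes the $\vartheta$-axis parallel to the chord, so $\srg[\vartheta](\Delta)$ would contain the arc through the gap and the witness would fail. With the bisector, $z_2=(z_1)_\vartheta$ and the whole chord has a single $\vartheta$-abscissa, so $\srg[\vartheta](\Delta)=\{z_1,z_2\}\subseteq\symreg$. The image is therefore just the two endpoints, not a ``short chordal arc inside $\symreg$'' as you wrote earlier. For general $n\geqsl 2$, take $\Delta=\blkdiag{z_1\mati_m, z_2\mati_{n-m}}$.

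In the sufficiency step, read $\disc[\gamma]$ in \cref{prop:dw-union-general} as the circle $\{|w|=\gamma\}$, as the paper implicitly does. You can justify this directly: for $(z,\gamma^2)\in\dwshell(\Delta)$, $\proj[\vartheta](z,\gamma^2)=\{p,p_\vartheta\}\subseteq\srg[\vartheta](\Delta)\subseteq\symreg$ with $|p|=\gamma$, and $z$ lies on the segment joining $p$ and $p_\vartheta$.
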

\begin{proof}
    See Appendix~\ref{pf:sym-set-char}.
\end{proof}
\begin{remark}
    Note that the disc, conic, and sectorial regions are all $\theta$-symmetric and circularly connected with the $\theta$-axis being the axis of symmetry (so $\theta$ can be any real number for disc regions).
\end{remark}
\begin{remark}
    In the scalar case, as shown in the proof of \cref{prop:dw-union-general}, $ \symset_\theta=\infreg[\theta]$ and $\symset = \symreg$. Clearly, $\symset_\theta = \symset$ if and only if $\symreg$ is $\theta$-symmetric. In this case, $\theta$-circular connectedness becomes irrelevant.    
\end{remark}
\begin{remark} \label{rem:justify-dw}
    It is always possible to suppress the entire intermediate step via DW shell by slightly adapting the arguments that lead to \cite[Thm.~IV.1]{zhangPhantomDavisWielandtShell2025a}.
    However, we believe that the DW shell is a well-thought convex%
    \footnote{Convexity could only fail for matrices of size 2, where the DW shell could be literally an (ellipsoidal) shell without interior. However, this won't cause any problem when considering shell separations.} 
    black box in higher dimension that encapsulates unnecessary technicality and offers many clean yet powerful interfaces for us to visualise, interpret, and develop results in lower dimensions. This echos with many situations in control and optimisation where lifting to higher dimensions presents a conceptually easier viewpoint to dismantle seemingly complicated problems in lower dimensions.
\end{remark}

\section{SYSTEM ROBUST STABILITY\label{sec:system-ana}}
We now consider the robust stability problem of feedback interconnection as formulated in \cref{subs:rs-formulation}.

The stability of the negative feedback interconnection as in \cref{fig:sys-diagram} will be established through a standard argument based on generalised Nyquist criterion \cite{desoerGeneralizedNyquistStability1980}.
For fixed $G, \Delta \in \rhinf^{n\times n}$, their feedback interconnection is stable if and only if
the eigenloci of the return ratio $G\Delta$ do not cross or encircle the $-1$ point. 
This can be ensured by requiring that the spectrum of $G(\ii*\omega)\Delta(\ii*\omega)$ does not intersect the interval $[-\infty,-1]$ on the real axis for all $\omega$:
\begin{align*}
    \det (\lambda \mati + G(\ii*\omega)\Delta(\ii*\omega)) \neq 0 \Forall \omega \in [0,\infty], \lambda \in [1,\infty]. 
\end{align*} 
The above condition is equivalent to:
\begin{align*}
    &\det (\mati + G(\ii*\omega) (\tau \Delta(\ii*\omega))) \neq 0 \\
    &\hspace{1.7cm}\Forall \omega \in [0,\infty], \tau \in [0,1]. 
    \numberthis \label{det-cond}
\end{align*}
\begin{defn} \label{def:contr-inv-set}
    A prescribed uncertain set $\uncert \subseteq \rhinf^{n\times n}$ of stable transfer matrices is said to be invariant under contractive scaling, or simply \emph{contractively invariant}, if it satisfies $\tau \uncert \subseteq \uncert$ for all $\tau \in [0,1]$. 
\end{defn}

For a contractively invariant uncertainty set $\uncert$, it follows from (\ref{det-cond}) and \cref{def:rs} that a given $G\in \rhinf^{n\times n}$ is robustly stable under negative feedback with respect to $\uncert$ if:
\begin{align*}
    &\det(I+G(\ii*\omega) \Delta(\ii*\omega)) \neq 0 \\
    &\hspace{1.7cm} \Forall \omega \in [0,\infty], \Delta \in \uncert. 
    \numberthis \label{cond-rs}
\end{align*}  
The form of (\ref{cond-rs}) at each frequency $\omega$ now matches that of the MRN property defined in \cref{defn:mrn}.
In the context of this and next sections,
we are concerned about $\uncert$ that is specified by frequencywise graphical region in the SRG plane.

Let $\symreg:[0,\infty]\rightarrow 2^{\cF}$ and $\theta:[0,\infty]\rightarrow \rF$ be region- and real-valued functions of frequencies, respectively.
Assume that the value of $\symreg$ is always closed. 
    Then the corresponding $\theta$- and general uncertainty sets are defined in a frequencywise fashion:
    \begin{align*}
        &\symset_\theta := \left\{\Delta \in \rhinf^{n\times n}: \srg[\theta(\omega)](\Delta(\ii*\omega)) \subseteq \symreg(\omega) \right.\\ 
        &\hspace{4.4cm}\left.\Forall\omega \in [0,\infty]\right\}, 
            \numberthis \label{sys-uncert-theta} \\
        &\symset := \left\{\Delta \in \rhinf^{n\times n}: \exists\, \vartheta(\omega)\in \rF \text{ such that }  \right.\\ 
        &\hspace{1.5em}\left.\srg[\vartheta(\omega)](\Delta(\ii*\omega)) \subseteq \symreg(\omega)\Forall\omega \in [0,\infty]\right\}. 
            \numberthis \label{sys-uncert-gen}
    \end{align*} 
Then we propose the following robust stability conditions:
\begin{theorem}[System $\theta$-Uncertainty] \label{thm:sys-theta}
    Assuming that $\symreg(\omega)$ is $\theta(\omega)$-symmetric for all $\omega\in [0,\infty]$,
    the feedback interconnection defined through $G\in \rhinf^{n\times n}$ is robustly stable with respect to $\symset_\theta$ if, for each $\omega \in [0,\infty]$, 
        \begin{align*}
            \invsrg[\theta(\omega)](-G(\ii*\omega)) \cap \starhull(\symreg(\omega)) = \emptyset, 
        \end{align*}
        or equivalently, 
        \begin{align*}
            \srg[-\theta(\omega)](G(\ii*\omega)) \subseteq \scomp(- (\starhull(\symreg(\omega))\setminus \{0\})\inv).
        \end{align*}
\end{theorem}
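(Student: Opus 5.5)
The plan is to reduce robust stability to the frequencywise determinant condition \eqref{det-cond}, and then apply \cref{thm:mat-theta} separately at each frequency and each contraction factor. Fix $\Delta\in\symset_\theta$. By the generalised Nyquist argument given before \cref{def:contr-inv-set}, it suffices to show that $\det(\mati+G(\ii*\omega)\,\tau\Delta(\ii*\omega))\neq0$ for all $\omega\in[0,\infty]$ and $\tau\in[0,1]$. I will not need $\symset_\theta$ itself to be contractively invariant: \eqref{det-cond} involves $\tau\Delta$ directly, and only the graphical condition on $\tau\Delta(\ii*\omega)$ matters.

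First, the case $\tau=0$ is trivial, since $\det\mati=1$. For $\tau\in(0,1]$, positive homogeneity of the $\theta$-SRG gives $\srg[\theta(\omega)](\tau\Delta(\ii*\omega))=\tau\,\srg[\theta(\omega)](\Delta(\ii*\omega))\subseteq\tau\symreg(\omega)$. This follows because scaling $u\mapsto \tau Mu$ scales the gain by $\tau$ and leaves the angle unchanged, and the rotation by $e^{\pm\ii*\theta}$ commutes with real scaling. Next, I check that $\tau\symreg(\omega)$ is a closed, $\theta(\omega)$-symmetric region. Closedness holds because $\tau>0$. Symmetry holds because the $\theta$-conjugation $z\mapsto z_\theta$ is real-linear, so $(\tau z)_\theta=\tau z_\theta$. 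Finally, $\tau\symreg(\omega)\subseteq\starhull(\symreg(\omega))$, so the hypothesis gives $\invsrg[\theta(\omega)](-G(\ii*\omega))\cap\tau\symreg(\omega)=\emptyset$.

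\cref{thm:mat-theta} (its sufficiency direction) then applies to the region $\tau\symreg(\omega)$ and the matrix $\tau\Delta(\ii*\omega)$, which belongs to the induced $\theta(\omega)$-uncertainty set. It yields $\det(\mati+G(\ii*\omega)\tau\Delta(\ii*\omega))\neq0$. Since this holds for every $\omega$ and $\tau$, \eqref{det-cond} holds and the loop is stable for each $\Delta\in\symset_\theta$, which is robust stability. Applying the theorem for each fixed $\tau$ avoids the issue that $\starhull(\symreg(\omega))$ need not be closed when $\symreg(\omega)$ is unbounded, so \cref{thm:mat-theta} is never applied to a possibly non-closed region.

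For the equivalent reformulation, I will argue pointwise using $\invsrg[\theta](M)=(\srg[-\theta](M))\inv$ for nonsingular $M$. I will also use the sign rule $\srg[-\theta](-G)=-\srg[-\theta](G)$ (with the appropriate conjugation), as tabulated in \cite[Tab.~III]{zhangPhantomDavisWielandtShell2025a}. Together these show that a point $w\neq0$ lies in $\invsrg[\theta](-G)$ if and only if $-w\inv\in\srg[-\theta](G)$. Hence the disjointness condition is equivalent to $\srg[-\theta](G)\cap\bigl(-(\starhull(\symreg)\setminus\{0\})\inv\bigr)=\emptyset$. The origin must be treated separately, and I expect this bookkeeping to be the main (if minor) obstacle. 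The origin always lies in $\starhull(\symreg)$, but it never lies in the inverse shell projection, since $\invmap$ sends points to $\nu>0$. Conversely, $0\in\srg[-\theta](G)$ occurs exactly when $G$ is singular, and then $\invsrg[\theta](-G)$ is unbounded near infinity rather than containing $0$. Removing $\{0\}$ before inversion, as in the statement, makes the two formulations match point for point. I will handle this by passing through the DW-space description, where $\invmap$ is a bijection away from $(0,0)$, and then projecting with $\proj[\theta]$.
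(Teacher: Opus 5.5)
Your proposal is correct and follows the paper's proof in substance. Both arguments reduce to the frequencywise determinant condition \eqref{det-cond}, use $\srg[\theta](\tau\Delta)=\tau\srg[\theta](\Delta)\subseteq\tau\symreg\subseteq\starhull(\symreg)$ with the sufficiency half of \cref{thm:mat-theta}, and obtain the reformulation by the same pointwise inversion $\invsrg[\theta](-G)=(-\srg[-\theta](G)\setminus\{0\})\inv$. The only deviation is that you apply \cref{thm:mat-theta} to the closed region $\tau\symreg$ for each $\tau$, whereas the paper applies it once to $\starhull(\symreg)$ through the contractively invariant set $\cup_{\tau\in[0,1]}\tau\symset_\theta$; this harmless refinement avoids relying on closedness of the star hull, which can fail for unbounded regions, although the paper's route is unaffected because only the sufficiency direction is used, where closedness plays no role.
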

\begin{theorem}[System General Uncertainty] \label{thm:sys-general}
    The feedback interconnection defined through $G\in \rhinf^{n\times n}$ is robustly stable with respect to $\symset$ if there exists a function $\vartheta:[0,\infty]\rightarrow \rF$ such that, for each $\omega \in [0,\infty]$,
        \begin{align*}
            \invsrg[\vartheta(\omega)](-G(\ii*\omega)) \cap \starhull((\circhull[\vartheta(\omega)](\symreg(\omega)))) = \emptyset,
        \end{align*}
        or equivalently,
        \begin{align*}
            \srg[-\vartheta(\omega)](G(\ii*\omega)) \subseteq \scomp(- (\starhull((\circhull[\vartheta(\omega)](\symreg(\omega))))\setminus \{0\})\inv).
        \end{align*}
\end{theorem}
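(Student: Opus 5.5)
The plan is to reduce the robust stability of the feedback loop to a frequencywise MRN property and then invoke \cref{thm:mat-general} at each frequency. The uncertainty set $\symset$ in \eqref{sys-uncert-gen} need not be contractively invariant, so I will not apply \eqref{cond-rs} to it directly. Instead, I will enlarge it to
\[
\tilde{\uncert} := \{\tau\Delta: \Delta\in\symset,\ \tau\in[0,1]\}.
\]
This set is contractively invariant, and RS with respect to $\tilde{\uncert}$ implies RS with respect to $\symset$. By the generalised Nyquist argument leading to \eqref{det-cond}, it then suffices to show that
\[
\det(\mati+G(\ii*\omega)\,\tau\Delta(\ii*\omega))\neq 0
\]
for all $\omega\in[0,\infty]$, all $\tau\in[0,1]$ and all $\Delta\in\symset$.

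Next I fix $\omega$ and a $\Delta\in\symset$, and let $\vartheta'$ be any angle with $\srg[\vartheta'](\Delta(\ii*\omega))\subseteq\symreg(\omega)$. Since $\srg[\vartheta'](\tau M)=\tau\,\srg[\vartheta'](M)$, we get
\[
\srg[\vartheta'](\tau\Delta(\ii*\omega))\subseteq\tau\symreg(\omega)\subseteq\starhull(\symreg(\omega)).
\]
Hence $\tau\Delta(\ii*\omega)$ belongs to the matrix uncertainty set \eqref{def:general-uncert} induced by the closed region $\region_\omega:=\starhull(\symreg(\omega))$.

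Closedness of $\region_\omega$ needs a brief check. If $\symreg(\omega)$ is compact, the star hull is closed; otherwise I take its closure, which only strengthens the hypothesis needed. The closure step is harmless: because $\invsrg(-G)$ is closed away from the origin when $G$ is nonsingular, separation from the star hull should pass to its closure. I expect some care here, but no real difficulty.

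With $\theta=\vartheta(\omega)$, the sufficiency part of \cref{thm:mat-general} then gives MRN for $G(\ii*\omega)$ with respect to that set, provided
\[
\invsrg[\vartheta(\omega)](-G(\ii*\omega))\cap\circhull[\vartheta(\omega)](\region_\omega)=\emptyset.
\]
The hypothesis, however, concerns $\starhull(\circhull[\vartheta(\omega)](\symreg(\omega)))$. The step I expect to be the main obstacle is therefore the geometric inclusion
\[
\circhull[\theta](\starhull(\symreg))\subseteq\starhull(\circhull[\theta](\symreg)).
\]
I would prove it circle by circle. Take $\gamma e^{\ii*(\theta+\delta)}$ in the left-hand set. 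Then some point $\gamma e^{\ii*(\theta\pm\delta_1)}$ lies in $\lambda\symreg$ with $\lambda\in(0,1]$, and some point $\gamma e^{\ii*(\theta\pm\delta_2)}$ lies in $\lambda'\symreg$, with $\delta_1\le|\delta|\le\delta_2$. Rescaling both points to their original circles $\gamma/\lambda$ and $\gamma/\lambda'$ in $\symreg$ does not immediately land them on a common circle. So I would instead argue via the min/max angles, using that a circular hull is a union of angular bands: $\angmin[\theta](\starhull\symreg;\gamma)$ is the infimum of $\angmin[\theta](\symreg;\gamma')$ over $\gamma'\ge\gamma$, and similarly for the supremum.

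If this inclusion turns out to be false in general, the fallback is to route the argument through the DW characterisation of \cref{prop:dw-union-general}, together with the fact that scaling $\Delta$ by $\tau$ maps $\dwshell$ along the paraboloidal rays $(z,\nu)\mapsto(\tau z,\tau^2\nu)$. This reduces matters to the separation of $\invdwshell(-G)$ from $\parahull$ of the DW union used in the proof of \cref{thm:mat-general}. Once either route gives MRN at each $\omega$ and $\tau$, RS follows.

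Finally, the equivalent form of the condition comes from $\invsrg[\vartheta](-G)=\srg[\vartheta](-G^{-1})=(\srg[-\vartheta](-G))^{-1}$ together with pointwise inversion, as in the remark following \cref{thm:mat-theta}. Nonsingularity of $G(\ii*\omega)$ is not needed: if $G$ is singular, the inverse SRG contains points near infinity, and the stated condition is read accordingly.
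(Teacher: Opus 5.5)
\textbf{There is a genuine gap.} Your reduction to frequencywise MRN over the contractive enlargement $\{\tau\Delta\}$ is fine. Your main route, however, needs the inclusion $\mathrm{cir}^{\theta}(\mathrm{star}\,\symreg)\subseteq\mathrm{star}(\mathrm{cir}^{\theta}\,\symreg)$, and this inclusion is false in general. The reason is the one you noticed yourself. At radius $\rho$, the two extreme angles of $\mathrm{star}\,\symreg$ may come from two different outer circles, whereas $\mathrm{star}(\mathrm{cir}^{\theta}\,\symreg)$ needs both to come from the same circle.

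A counterexample: take $\theta=0$ and the closed, $0$-symmetric region $\symreg=\{1,2\mathbf{i},-2\mathbf{i}\}$.
\begin{itemize}
\item Its only nonempty circular cross sections are $\{1\}$ and $\{\pm 2\mathbf{i}\}$. So $\mathrm{cir}^{0}\,\symreg=\symreg$, and $\mathrm{star}(\mathrm{cir}^{0}\,\symreg)$ is the union of the segments $[0,1]$ and $[0,\pm 2\mathbf{i}]$.
\item That union is also $\mathrm{star}\,\symreg$. Its cross section at each $\rho\in(0,1]$ is $\{\rho,\pm\mathbf{i}\rho\}$, so $\mathrm{cir}^{0}(\mathrm{star}\,\symreg)$ contains the closed right half of the unit disc.
\item With $z=\tfrac12 e^{\mathbf{i}\pi/4}$ and $G=-z^{-1}I$ one gets $\invsrg[0](-G)=\{z,\bar z\}$. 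This set satisfies the hypothesis of the statement, but it violates the separation that applying \cref{thm:mat-general} to $\mathrm{star}\,\symreg$ requires.
\end{itemize}
Since $\mathrm{star}(\mathrm{cir}^{\theta}\,\symreg)\subseteq\mathrm{cir}^{\theta}(\mathrm{star}\,\symreg)$ always holds, your route can only prove the theorem under a hypothesis that is in general strictly stronger.

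\textbf{The missing idea is to take the circular hull before scaling, not after.}
\begin{itemize}
\item The sufficiency part of the proof of \cref{thm:mat-general} gives $\proj[\vartheta]\dwshell(\symset)\subseteq\mathrm{cir}^{\vartheta}\,\symreg$. Hence every $\Delta$ in the system set satisfies $\mathbf{SRG}_{\vartheta(\omega)}(\Delta(\mathbf{i}\omega))\subseteq\mathrm{cir}^{\vartheta(\omega)}\,\symreg(\omega)$, whichever angle certified its membership.
\item So $\symset$ lies inside the $\vartheta$-uncertainty set induced by the frequencywise $\vartheta(\omega)$-symmetric region $\mathrm{cir}^{\vartheta(\omega)}\,\symreg(\omega)$. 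The claim then follows from \cref{thm:sys-theta} applied to that region.
\item Equivalently, argue directly. The identity $\srg[\vartheta](\tau M)=\tau\,\srg[\vartheta](M)$ puts $\mathbf{SRG}_{\vartheta}(\tau\Delta)$ inside $\mathrm{star}(\mathrm{cir}^{\vartheta}\,\symreg)$. Any common point of $\invdwshell(-G)$ and $\dwshell(\tau\Delta)$ would then project under $\proj[\vartheta]$ into the empty intersection. \Cref{lem:dw-sep-ext} with $U=I$ then gives nonsingularity.
\end{itemize}
This is how the paper's one-line ``follows similarly from \cref{thm:mat-general}'' has to be read.

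Your DW fallback points in this direction, but it is left as a contingency. As phrased (separation from $\mathrm{para}$ of the DW union), it would use all scalings $\lambda\geqslant 0$. That gives a cone and a far stronger requirement than the hypothesis. You need only $\tau\in[0,1]$ together with $\proj[\vartheta](\tau z,\tau^2\nu)=\tau\,\proj[\vartheta](z,\nu)$.

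Finally, your closure step is unsound as stated, since a closed set disjoint from $S$ may still meet $\overline{S}$. It is also unnecessary, because the sufficiency directions of \cref{thm:mat-theta,thm:mat-general} never use closedness.
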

\begin{remark} \label{rem:sys-sym-circ}
    In the spirit of \cref{thm:mat-general},
    when the prescribed SRG region $\symreg(\omega)$ is frequencywise symmetric, we choose $\vartheta(\omega)$ in \cref{thm:sys-general} to be the angular function $\theta$ determining the axis of symmetry frequencywise.
    If, in addition, $\symreg(\omega)$ is $\theta(\omega)$-circularly connected, then this choice does not introduce extra conservatism. 
    In this case, the conditions from \cref{thm:sys-theta} and \cref{thm:sys-general} coincide.   
\end{remark}
\begin{myproof}[Proofs of \cref{thm:sys-theta,thm:sys-general}]
Both theorems follow from a frequencywise application of the corresponding matrix results, \cref{thm:mat-theta,thm:mat-general}, to (\ref{cond-rs}). We illustrate this by considering the case of $\theta$-uncertainty. 
Since the condition is stated frequencywise, we suppress the dependence on $\omega$ for notational simplicity.

Let $\symsettld_\theta$ denote the $\theta$-uncertainty set induced by $\starhull(\symreg)$.
Note that for any $\Delta \in \symset_\theta$ and $\tau \in [0,1]$, we have by \cite[Prop.~2]{zhangPhantomDavisWielandtShell2025a} that
$\srg[\theta](\tau \Delta) = \tau \srg[\theta](\Delta) \subseteq \tau \symreg \subseteq \starhull(\symreg)$. Therefore, $\cup_{\tau\in [0,1]}\tau \symset_\theta\subseteq \symsettld_\theta$ while the reverse containment does not hold in general.
It follows from \cref{thm:mat-theta} that $\invsrg[\theta](-G)\cap \starhull(\symreg) = \emptyset$ implies that MRN holds for $G$ with respect to $\symsettld_\theta$, and hence also with respect to $\cup_{\tau\in [0,1]}\tau \symset_\theta$, which is contractively invariant. Therefore, by (\ref{cond-rs}), the condition implies RS with respect to $\symset_\theta$.

Furthermore, by \cite[Prop.~2, Tab.~III]{zhangPhantomDavisWielandtShell2025a}, we can rewrite 
\begin{align*}
    &\invsrg[\theta](-G) = \inv(\srg[-\theta](-G)\setminus\{0\})\\
    &=(-\srg[-\theta-\pi](G)\setminus\{0\})\inv = (-\srg[-\theta](G)\setminus \{0\})\inv.
\end{align*}
A straightforward pointwise argument then yields the equivalence%
\footnote{Note that, by definition, $\invsrg[\theta](-G)$ cannot attain $0$. Therefore, $0$ can always be excluded from $\starhull(\symreg)$ in the condition without loss of generality.}
with $\srg[-\theta](G)\subseteq \scomp(-(\starhull(\symreg)\setminus\{0\})\inv)$.
The proof for \cref{thm:sys-general} follows similarly from \cref{thm:mat-general}.
\end{myproof}

Returning to the question of robust stability against disc, conic, and sectorial uncertainties $\gasys_\gamma, \cosys[\alpha,\beta]$, and $\sesys_\gamma[\alpha,\beta]$, their corresponding inducing regions are all frequency \emph{independent}, $\theta$-symmetric and $\theta$-circularly connected where $\theta$ can chosen arbitrarily for the disc $\disc[\gamma]$, and as $\frac{\alpha+\beta}{2}$ for the cone $\cone*[\alpha,\beta]$ and sector $\sector[\gamma][\alpha,\beta]$. 
It is also worth noting that their star hulls are all equal to themselves. Then, by \cref{thm:sys-theta}, we propose the following sufficient conditions:
\begin{corol}[Disc, Conic, Sectorial Uncertainties] \label{cor:ga-ph-sec}
    Let $\gamma >0 $ and $\alpha,\beta\in\rF$ with $\alpha \leqsl\beta$ be given. The feedback interconnection defined through $G(s)\in \rhinf^{n\times n}$ is robustly stable with respect to:
    \begin{enumerate}
        \item Disc uncertainty $\gasys_\gamma$ if 
            $\srg(G(\ii*\omega)) \subseteq \sint(\disc[1/\gamma])$ for each $\omega\in[0,\infty]$.
        \item Conic uncertainty $\cosys[\alpha,\beta]$ if, for each $\omega \in [0,\infty]$, 
            $\srg[-(\alpha+\beta)/2](G(\ii*\omega)) \subseteq \{0\}\cup\,\sint(\cone*[-\alpha-\pi,\pi-\beta])$.
        \item Sectorial uncertainty $\sesys_\gamma[\alpha,\beta]$ if, for each $\omega\in[0,\infty]$,
            $\srg[-(\alpha+\beta)/2](G(\ii*\omega)) \subseteq \sint((\disc[1/\gamma]\cup\cone*[-\alpha-\pi,\pi-\beta]))$.
    \end{enumerate}
\end{corol}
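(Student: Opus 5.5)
The plan is to specialise \cref{thm:sys-theta} to the three constant regions $\disc[\gamma]$, $\cone*[\alpha,\beta]$ and $\sector[\gamma][\alpha,\beta]$, with $\theta(\omega)\equiv 0$ for the disc and $\theta(\omega)\equiv(\alpha+\beta)/2$ for the cone and the sector. Three ingredients are needed.

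\emph{Applicability of \cref{thm:sys-theta}.} First I will check that each region is $\theta$-symmetric for the chosen $\theta$.
\begin{itemize}
\item The disc is rotation invariant, so it is symmetric about every axis.
\item For $z$ in the cone, $z_\theta$ has argument $2\theta-\angle z$. Since $\alpha\leqsl\angle z\leqsl\beta$, this gives $\alpha\leqsl 2\theta-\angle z\leqsl\beta$ when $\theta=(\alpha+\beta)/2$.
\item The sector is the intersection of two symmetric sets, so it is symmetric as well.
\end{itemize}
Second, each region equals its star hull. The disc and cone are closed under scaling by $\lambda\in[0,1]$, and hence so is their intersection. With both facts in hand, \cref{thm:sys-theta} applies directly. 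It yields RS whenever, for every $\omega$,
\[
\srg[-\theta](G(\ii*\omega))\subseteq\scomp(-(\symreg\setminus\{0\})\inv).
\]

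\emph{Computing the allowed region.} The remaining work is pointwise: for each $\symreg$, show that the stated region is contained in $\scomp(-(\symreg\setminus\{0\})\inv)$. Equivalently, show that the set $-(\symreg\setminus\{0\})\inv$ is disjoint from the stated region. I will use $-(re^{\ii*\phi})\inv=r\inv e^{\ii*(\pi-\phi)}$.
\begin{itemize}
\item \emph{Disc.} $\disc[\gamma]\setminus\{0\}$ maps to $\{|w|\geqsl1/\gamma\}$. Its complement is $\sint(\disc[1/\gamma])$, which gives item 1) with $\theta=0$ and $\srg[0]=\srg$.
\item \emph{Cone.} Arguments in $[\alpha,\beta]$ map to $[\pi-\beta,\pi-\alpha]$, with all moduli in $(0,\infty)$. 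The complement is therefore $\{0\}$ together with the arguments in the open interval $(\pi-\alpha-2\pi,\pi-\beta)=(-\alpha-\pi,\pi-\beta)$, read modulo $2\pi$. This is exactly $\{0\}\cup\sint(\cone*[-\alpha-\pi,\pi-\beta])$.
\item \emph{Sector.} The image is the set of points with modulus $\geqsl1/\gamma$ and argument in $[\pi-\beta,\pi-\alpha]$. Its complement is the set of points with modulus $<1/\gamma$ or argument in the open complementary arc. This is $\sint(\disc[1/\gamma])\cup\sint(\cone*[-\alpha-\pi,\pi-\beta])$, together with $0$. One then needs to check that this union coincides with $\sint(\disc[1/\gamma]\cup\cone*[-\alpha-\pi,\pi-\beta])$. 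The only candidate points of discrepancy are boundary rays of the cone at modulus $\geqsl 1/\gamma$, and these lie in the closed image set, so they are excluded from both descriptions.
\end{itemize}

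\emph{Main obstacle.} The hard part is the bookkeeping in the sector case, together with the degenerate angle widths.
\begin{itemize}
\item If $\alpha=\beta$, the cone interior is empty, and the boundary ray of the cone is where the two descriptions must be compared.
\item If $\beta-\alpha\geqsl2\pi$, the cone is all of $\cF$. Then conditions 2) and 3) must be read as requiring $\srg=\{0\}$ and the small-gain condition, respectively, and the formulas should still match.
\end{itemize}
I would handle these extreme cases separately. Throughout, the argument convention $\angle z\in[\theta-\pi,\theta+\pi)$ must be applied consistently.
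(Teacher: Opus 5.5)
Your plan is the paper's argument: the paper obtains the corollary by applying \cref{thm:sys-theta} to the frequency-independent regions $\disc[\gamma]$, $\cone*[\alpha,\beta]$, $\sector[\gamma][\alpha,\beta]$, with $\theta=0$ for the disc and $\theta=(\alpha+\beta)/2$ for the cone and sector. It uses that these regions are $\theta$-symmetric and equal to their star hulls, and then passes to the stated regions by the pointwise inversion $-(re^{\ii*\phi})\inv=r\inv e^{\ii*(\pi-\phi)}$, which the paper only calls straightforward and you actually carry out. For $\beta-\alpha>0$ your computations are correct.

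The case $\alpha=\beta$ that you defer cannot be handled as the statement is written. The uncertainty cone degenerating to a ray is harmless. The problem is that the allowed cone $\cone*[-\alpha-\pi,\pi-\alpha]$ has angular width $2\pi$, so it is all of $\cF$. With $\sint$ read as the topological interior, items 2) and 3) then hold for every $G$, yet RS can fail. For example, take $\alpha=\beta=0$, $\gamma=1$, $G=-\mati$ and $\Delta=\mati\in\sesys_1[0,0]\subseteq\cosys[0,0]$; then $\mati+G\Delta=0$.

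For the same reason, your sector check that the boundary rays are excluded from $\sint(\disc[1/\gamma]\cup\cone*[-\alpha-\pi,\pi-\beta])$ silently needs $\alpha<\beta$. Lying in the image set only removes those rays from the complement. Removing them from the interior of the union needs them to be limits of points outside the union, namely points with modulus $>1/\gamma$ and argument in the open arc $(\pi-\beta,\pi-\alpha)$. That arc is empty when $\alpha=\beta$.

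The paper's justification does not address this either. The repair is to read $\sint(\cone*[a,b])$ as the open angular sector $\{z\neq 0: a<\angle z<b\}$. This is what the $\theta$-angle form the paper offers as equivalent, $\angmax[-(\alpha+\beta)/2](G(\ii*\omega))<\pi-(\beta-\alpha)/2$, expresses. Item 3) should likewise be read as $\sint(\disc[1/\gamma])\cup\sint(\cone*[-\alpha-\pi,\pi-\beta])$, which is exactly the set your computation produces. With that reading your argument covers $\alpha=\beta$ without change.
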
 
\begin{remark}\label{rem:ga-ph-sec}
    The first condition is a small-gain condition. It can be interpreted as a bound on the $\hinf$-gain: $\|G(s)\|_\hinf < 1/\gamma$ where $\nrmhinf(G(s)):=\mathrm{ess\,sup}_{\omega\in \rF} \|G(\ii*\omega)\|_2$.
    Likewise, we define the $\hinf$-$\theta$-angle of a system by: 
    \begin{align*}
        \anghinf(G(s)):= \mathrm{ess\,sup}_{\omega\in[0,\infty]} \angmax[\theta](G(\ii*\omega)) \in [0,\pi].
    \end{align*} %
    The second condition can then be restated as a bound on the $\hinf$-angle: 
    $\anghinf[-(\alpha+\beta)/2](G(s)) < \pi - (\beta-\alpha)/2$.
    The third condition is a mixed gain-angle condition.
    It is well known that the first small-gain condition is also necessary (e.g., \cite[Thm.~9.1]{zhouRobustOptimalControl1995}). 
    In this paper, we establish the necessity of the remaining two conditions at the matrix level and conjecture that this necessity extends to systems as well.
    Furthermore, if this conjecture holds, we expect the ``only if'' part of \cref{thm:sys-theta} to hold whenever $\starhull(\symreg(\omega)) = \symreg(\omega)$ for each $\omega$.
    To ensure compatibility with the real rational uncertainty setting, we anticipate a need for mild assumptions on $\symreg(\omega)$, such as requiring $\symreg(0)$ and $\symreg(\infty)$ to contain either the nonnegative or nonpositive real axis.
\end{remark}
\begin{remark}
    For systems in $\rhinf^{n\times n}$, the positive real property can be characterised in terms of the $\hinf$-$\theta$-angle by $\anghinf[0](G(s))\leqsl \pi/2$, whereas the negative imaginary property is characterised by $\anghinf[-\pi/2](G(s))\leqsl\pi/2$.
\end{remark}

\section{REMARKS ON STATE-SPACE ROBUSTNESS CHARACTERISATIONS\label{sec:ss}}
The robust stability conditions in \cref{cor:ga-ph-sec} can be recast to quantify the robustness of $G$ against disc, conic, and sectorial uncertainties: $1/\nrmhinf(G)$ serves as a robustness measure against $\hinf$-gain-bounded uncertainties, and $\pi - \anghinf[-\theta](G)$ provides a robustness measure against $\hinf$-$\theta$-angle-bounded uncertainties. 
In general, quantifying robustness with respect to uncertainties induced by a class of geometric regions amounts to probing for the ``smallest'' such region that contains the frequencywise $\theta$-SRG of the given $G$. 

Consider a real rational stable LTI system $G\in\rhinf^{n\times n}$ with a minimal realisation $\ssreal$,
it is of general interest to investigate how the abovesaid process can be implemented in a numerically efficient way, which often relates to %
matrix inequality (MI) %
characterisations %
of the following form, denoted $\mi(X,Y,\tld(Z); \Theta(Z))$:
\begin{align*}
    &\milhs(X,Y,\tld({Z});\Theta(Z)) \preccurlyeq \mato, \qquad  Y \succcurlyeq \mato, 
\end{align*}
where
\begin{align*}
    \milhs(X,Y,\tld(Z);\Theta(Z))  = \left[\begin{smallmatrix}
        A  &  B \\ \mati  &  \mato
    \end{smallmatrix}\right]\ct
    \left[\begin{smallmatrix}
        \mato  & X+\ii* Y  \\ X-\ii* Y  &  \mato
    \end{smallmatrix} \right]
  \left[\begin{smallmatrix}
        A  & B \\ \mati  &  \mato
    \end{smallmatrix} \right]  +  \Theta(Z).
\end{align*}
Here, $\Theta(Z)$ denotes a Hermitian matrix associated with the geometric region %
parameterised by real parameters $Z$ 
and the matrix variables $X$, $Y$ are
Hermitian. %
$\tld(Z)$ denotes a subset of the parameters $Z$ that are treated as variables. When $Z$ is fixed, then $\tld(Z)$ is empty and will be surpressed. In situations where complex regions are probed via a family of regions defined by $\Theta(Z)$, this typically involves imposing constraints on $Z$ while varying a subset $\tld(Z)$. We therefore list $\tld(Z)$ together with $X$ and $Y$ to emphasise that these quantities are variables in the given context.
When the MI is linear in all variables, we insert `$\mathrm{L}$' before `$\mathrm{MI}$' to emphasise its linear nature.

In this section, we employ well-developed tools as in  \cite{guptaRobustStabilityAnalysis1996,bridgemanComparativeStudyInput2018,liangFeedbackStabilityMixed2025,yangSectoredRealLemma2025} and use discs (not necessarily centered at the origin) to bound the hyperbolic convex hull of $\theta$-SRGs of the nonnegative frequency response of $G$. Then by searching for the bounding discs in an organised way, we can develop a graphical robustness characterisation of $G$, which we term its $\theta$-$\varphi$-$\gamma$ profile.  For disc-boundedness, we %
can state the following result.

\begin{lemma} \label{lem:disc-bounded}
    Given $G\in \rhinf^{n\times n}$, $\theta\in\rF$, and $c \geqsl 0, r >0$, the following statements are equivalent:
    \begin{enumerate}
        \item $\srg[\theta](G(\ii*\omega)) \subseteq \disc[r](ce^{\ii*\theta})$ for all $\omega\in[0,\infty]$;
        \item $G$ satisfies the following quadratic constraint:
        \begin{align*}
            \begin{bNiceArray}{c}
                \mati \\ G(\ii*\omega)
            \end{bNiceArray}\ct
            \begin{bNiceArray}{cc}
                -(c^2-r^2)\mati_n & ce^{-\ii*\theta}\mati_n\\
                ce^{\ii*\theta}\mati_n & -\mati_n
            \end{bNiceArray}
            \begin{bNiceArray}{c}
                \mati \\ G(\ii*\omega)
            \end{bNiceArray} \succcurlyeq \mato,
        \end{align*} for all $\omega \in [0,\infty]$.
        \item $\lmi(X, Y; \mdisc(\theta,c,r))$ with $\mdisc(\theta,c,r)$ given by
        \begin{align*}
        \begin{bNiceArray}{cc}
            C\tp \\ D\tp-ce^{-\ii*\theta} \mati_n
        \end{bNiceArray}
        \begin{bNiceArray}{cc}[cell-space-limits=1pt]
            C & D-ce^{\ii*\theta} \mati_n
        \end{bNiceArray} - 
        \begin{bNiceArray}{cc}[cell-space-limits=1pt]
            \mato & \mato\\
            \mato & r^2 \mati_n
        \end{bNiceArray}\end{align*} is feasible.
    \end{enumerate} 
\end{lemma}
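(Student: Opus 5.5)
The plan is to prove 1)$\Leftrightarrow$2) by a direct pointwise computation at each frequency, and 2)$\Leftrightarrow$3) by a generalised KYP lemma restricted to the nonnegative frequency range.

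\emph{Step 1: 1)$\Leftrightarrow$2).} Fix $\omega$ and write $M := e^{-\ii*\theta}G(\ii*\omega)$. By definition, $\srg[\theta](G(\ii*\omega)) = e^{\ii*\theta}\srg(M)$, so 1) is equivalent to $\srg(M)\subseteq \disc[r](c)$. Take $u\neq 0$ and a point $z=\gamma(M;u)e^{\pm\ii*\ang(M;u)}$. Since the centre $c$ is real, both conjugates have the same distance to $c$, and
\begin{align*}
|z-c|^2 = \gamma^2 - 2c\gamma\cos\ang + c^2 = \frac{\|Mu\|^2 - 2c\,\tRe*\iprod(u)(Mu) + c^2\|u\|^2}{\|u\|^2} = \frac{\|(M-c\mati)u\|^2}{\|u\|^2}.
\end{align*}
This identity also covers $u\in\ker M$, which is mapped to the origin: there $|0-c|^2=c^2$, matching the right-hand side. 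Hence 1) at frequency $\omega$ holds iff $(M-c\mati)\ct*(M-c\mati)\preccurlyeq r^2\mati$, i.e. iff $(G-ce^{\ii*\theta}\mati)\ct*(G-ce^{\ii*\theta}\mati)\preccurlyeq r^2\mati$. Expanding the quadratic form in 2) gives
\begin{align*}
-(c^2-r^2)\mati + ce^{-\ii*\theta}G + ce^{\ii*\theta}G\ct* - G\ct* G = r^2\mati - (G-ce^{\ii*\theta}\mati)\ct*(G-ce^{\ii*\theta}\mati),
\end{align*}
so the two statements coincide frequency by frequency.

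\emph{Step 2: reformulation for KYP.} Set $\tld(G) := G - ce^{\ii*\theta}\mati$, with realisation $(A,B,C,D-ce^{\ii*\theta}\mati_n)$. This realisation is still minimal, but it is \emph{complex}. Condition 2) then reads
\begin{align*}
\begin{bNiceArray}{c}\tld(G)(\ii*\omega)\\ \mati\end{bNiceArray}\ct
\begin{bNiceArray}{cc}-\mati & \mato\\ \mato & r^2\mati\end{bNiceArray}
\begin{bNiceArray}{c}\tld(G)(\ii*\omega)\\ \mati\end{bNiceArray}\succcurlyeq \mato
\quad\Forall \omega\in[0,\infty].
\end{align*}
Because $\tld(G)$ is not real rational, $\tld(G)(-\ii*\omega)$ is not conjugate to $\tld(G)(\ii*\omega)$. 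The standard bounded real lemma, which covers all of $\rF$, is therefore not applicable. Instead, I will use the generalised KYP lemma of Iwasaki--Hara for the semi-infinite range $\omega\geqsl 0$. Its frequency-range multiplier is $\left[\begin{smallmatrix}\mato & X+\ii*Y\\ X-\ii*Y & \mato\end{smallmatrix}\right]$ with $Y\succcurlyeq\mato$, which is exactly the structure of $\milhs$. Substituting $\tld(G)$ into the generalised KYP lemma turns the frequency inequality into $\milhs(X,Y;\Theta)\preccurlyeq\mato$, $Y\succcurlyeq\mato$. The block $\Theta$ is the quadratic supply $[C\;\; D-ce^{\ii*\theta}\mati_n]\ct*[C\;\; D-ce^{\ii*\theta}\mati_n] - \blkdiag{\mato, r^2\mati_n}$. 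A short check confirms that this is $\mdisc(\theta,c,r)$ as stated; the transposes written in the lemma should be read as conjugate transposes in the complex case.

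\emph{Main obstacle.} The delicate point is applying the \emph{non-strict} generalised KYP lemma, together with the endpoint $\omega=\infty$. The non-strict version requires $(A,B)$ controllable, which follows from minimality of $\ssreal$. It also requires that the frequency-domain inequality holds on the closed range, and the point $\omega=\infty$ must be handled as the limit $D-ce^{\ii*\theta}\mati$. I would first establish the LMI$\Rightarrow$FDI direction. This is routine: multiply $\milhs$ by $[(\ii*\omega\mati-A)\inv B;\mati]$ on both sides and use $Y\succcurlyeq\mato$, $\omega\geqsl0$, so that the $X,Y$ term becomes nonnegative. For the converse I would invoke the non-strict theorem directly. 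If its hypotheses cause trouble, a fallback is a strict version for $r+\epsilon$, followed by letting $\epsilon\to0$. That limit needs a compactness argument on the multipliers, which is the step I expect to need the most care.
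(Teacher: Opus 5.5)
Your proposal is correct and follows essentially the same route as the paper: a frequencywise equivalence between the disc containment and the quadratic constraint, then the non-strict generalised KYP lemma on the half-axis $\omega \geqslant 0$, whose multiplier is exactly the $X\pm \ii* Y$, $Y\succcurlyeq\mato$ structure of the LMI, with controllability inherited from minimality. The only difference is presentational: you verify 1)$\Leftrightarrow$2) by the direct identity $|z-c|^2=\|(M-c\mati)u\|^2/\|u\|^2$, whereas the paper expresses the same computation as the DW shell of $G$ lying in a half-space of the DW space that projects onto the disc in the SRG plane.
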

\begin{myproof}
    The equality in 2) defines a hyperplane in $\cF\times \rF$ passing through $((c-r)e^{\ii*\theta},(c-r)^2)$, with normal vector $(2ce^{\ii*\theta},-1)$. Then 2) requires that $\dwshell(G)$ lies in the half space determined by this hyperplane in the direction of the normal vector, which is equivalent to 1) (see \cite[Props. 2 and 3]{zhangPhantomDavisWielandtShell2025a} and \cref{fig:algo_demo}).
    The equivalence between 2) and 3) follows from a direct application of the non-strict version of the generalised KYP lemma \cite{iwasakiGeneralizedKYPLemma2005,liangFeedbackStabilityMixed2025,yangSectoredRealLemma2025}.
\end{myproof}
\begin{remark}
    When $\theta=0\text{ or }\pi$, the SRG condition reduces to the time-domain sector-bounded systems studied in \cite{guptaRobustStabilityAnalysis1996,bridgemanComparativeStudyInput2018}. In particular, 
    when $\theta=0$, $c = r \rightarrow \infty$, the condition characterises the positive real property. 
    In this case, we can divide the matrix inequalities in 2) and 3) by $r$, and take the limit, which gives
    \begin{align*}
        \mdisc(\theta, \infty, \infty) := \begin{bmatrix}
            \mato & - e^{\ii*\theta}C\tp \\
            -e^{-\ii*\theta} C & -(e^{\ii*\theta} D\tp + e^{-\ii*\theta}D)
        \end{bmatrix}.
    \end{align*}
    Through a family of carefully chosen $c, \theta\text{ and }r$, one could approximate more complicated regions.
\end{remark}

Based on \cref{lem:disc-bounded},
\cref{tab:mi-char-ga-ang} lists three types of bounding discs that are instrumental here. 
Note the MIs for gain- and angle-type discs are essentially linear in $r$ and $c$, respectively. The former is linear in $r^2$ while the latter $\mi(X,Y,c;\mdisc(\theta, c,c \sin\,\varphi))$ 
    is equivalent to:  
    \begin{align*}
          \renewcommand{\arraystretch}{1.6}
        \begin{bNiceArray}{c:c}
            -\mati_n & \Block[c]{1-1}{\begin{bmatrix}
                \mato & \, c \cos\varphi \mati_n 
            \end{bmatrix}} \\ \hdottedline
            \Block[c]{1-1}{
            \begin{bmatrix}
                \mato \\ c \cos\varphi \mati_n
            \end{bmatrix}} &
            \Block[c]{1-1}{S(X,Y,c)}
        \end{bNiceArray}
        \preccurlyeq\mato,\quad Y \succcurlyeq\mato,
    \end{align*} with the block $S(X,Y,c)$ being
    \begin{align*}
        \milhs(X, Y, c;\mdisc(\theta,c,c\sin\varphi)) - \begin{bmatrix}
            \mato \\ c\cos\varphi\mati_n
        \end{bmatrix}\begin{bmatrix}
            \mato \\ c\cos\varphi\mati_n
        \end{bmatrix}\ct.
    \end{align*} The block above is linear in $X, Y$ and $c$, and hence the entire constraint is an LMI.
\begin{table}[t]
\centering
\caption{Gain-, angle-, and mixed-type discs.\label{tab:mi-char-ga-ang}}\vspace{-.2cm}
\begin{NiceTabular}{|@{\,}l@{\hspace{.4em}}p{3.6cm}@{\hspace{.6em}}p{3.8cm}@{\,}|}
\hline 
Type & Bounding discs &  Characterisation\\ \hline\hline
Gain & discs centered at origin: $\disc[r](0)$ & $\mi(X,Y,r; \mdisc(0,0,r))$\\
$\theta$-angle & inscribed discs of a cone: $\disc[c\sin\varphi](ce^{\ii*\theta})$&  $\mi(X,Y,c; \mdisc(\theta, c,c\sin\varphi))$\\
Mixed & inscribed discs of the union of a gain-type disc and a cone & 
\begin{minipage}[t]{3.8cm}
    $\lmi(X,Y,\lambda; (1-\lambda)\mdisc(0,0,r)$\\
    $+\lambda\mdisc(\theta, c,c\sin\varphi)), \lambda \in [0,1]$
\end{minipage}    
\\\hline
\end{NiceTabular}\vspace{-.6cm}
\end{table}

\begin{figure}[h]
        \centering\hspace{-1em}
        \subfloat[DW space: bounding halfspaces \label{subfig:algo-3d}]{
            \makebox[.55\columnwidth]{\adjincludegraphics[Clip={.01\width} {.05\height} {.01\width} {.03\height},height=3.6cm]{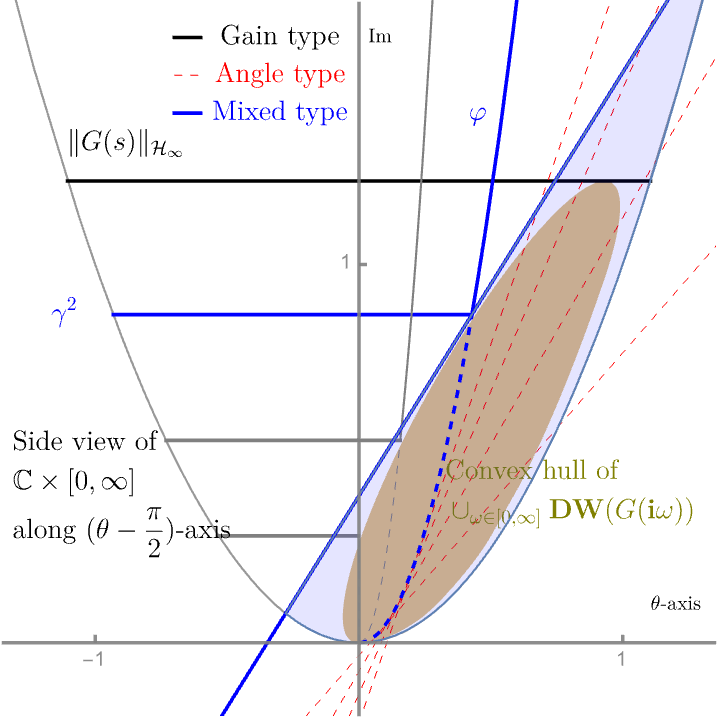}}}
        \subfloat[SRG plane: bounding discs \label{subfig:algo-2d}]{
            \adjincludegraphics[Clip={.04\width} {.1\height} {.005\width} {0},height=3.6cm]{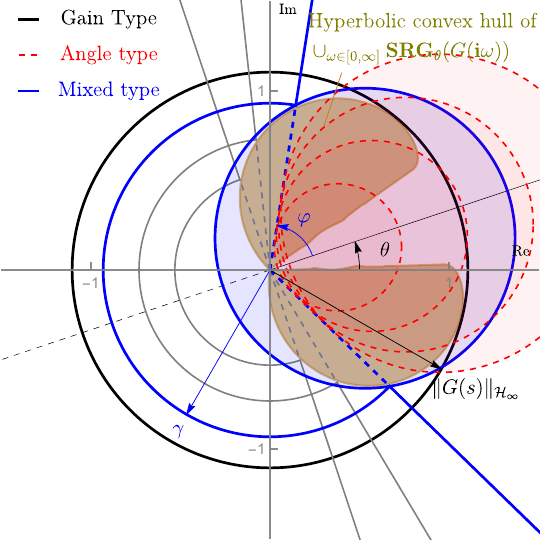}}
            \vspace{-.5cm}
        \caption{Using gain-, angle-, mixed-type halfspaces (discs) to cover the (hyperbolic) convex hull of frequencywise DW shells ($\theta$-SRGs). \label{fig:algo_demo}} \vspace{-.5cm}
\end{figure}

\def\gamarray{\boldsymbol{\Gamma}_\theta}
\def\phiarray{\boldsymbol{\Phi}_\theta}
\def\lamarray{\boldsymbol{\Lambda}}
\def\phiinit{\varphi_{\mathrm{wst}}}
\def\algotheta{\boldsymbol{\theta}}
\begin{algorithm}[!t]
\footnotesize
\caption{Computation of the $\theta$-$\varphi$-$\gamma$ profile of $G$.}
\label{alg:theta-ga-ang}
\KwData{Minimal realisation $(A, B, C, D)$ of $G(s)$, sample sizes $N_\theta$ and $N_\gamma$ along $\theta$-axis and in each $\varphi$-$\gamma$ plane, error bound.}
\KwResult{For each $\theta$, a gain array $\gamarray$ and an angle array $\phiarray$.}

\tcc{{\bfseries Step 1. Compute $\nrmhinf(G)$.}}
\hspace{-.3em}$\gamarray(N_\gamma) \leftarrow$ Minimise $r$ subject to $\mi(X, Y, r; 0,0,r)$; $\phiarray(N_\gamma)\leftarrow 0$\; 
\tcc{{\bfseries Step 2. Search for $\anghinf(G)$.}}
$\algotheta\leftarrow$ uniformly sample $N_\theta$ points from $[-\pi,\pi)$\;
\For{each $\theta \in \algotheta$}{
    \tcc{A half-plane ($\theta$-positive real) test.}
    \eIf{$\lmi(X, Y; \mdisc(\theta,\infty, \infty))$ is feasible}{
        \SetKwBlock{PhiBisect}{$\{\phiarray(0),c_{\theta}\}\leftarrow$ Bisection to minimise $\varphi\in[0,\pi/2]$}{end}
        \PhiBisect{
            At each iteration with a fixed $\varphi$, minimise and maximise $c$ subject to 
            $
                \mi(X, Y, c; \mdisc(\theta,c,c\sin\varphi))
            $.
            If the MI is feasible and the optimal values are $\smal[c],\larg[c]$, then decrease $\varphi$ to $ \arccos(\frac{\smal[c]+\larg[c]}{2\sqrt{\smal[c]\larg[c]}}\cos\,\varphi)$ or to the next bisection point, whichever is smaller; Otherwise, increase $\varphi$. Iterate until error bound is reached and update $\phiarray(0), c_\theta$ to $\varphi, (\smal[c]+\larg[c])/2$, respectively.
        }
    }{\hspace{-4pt}$\phiarray(0) \leftarrow \pi/2$; $c_{\theta}\leftarrow$ 0; 
    \tcc{In fact, $\phiarray(0)>\pi/2$.%
        }}
    \tcc{{\bfseries Step 3. Mixed gain-angle refinement.}}
    \eIf{$\phiarray(0)$ is $\pi/2$}{
        \SetKwBlock{PrBisect}{$\{\gamarray(1),\lamarray(1)\}\leftarrow$ Bisection to minimise $r\in[0,\gamarray(N_\gamma)]$}{end}
        \PrBisect{
            At each iteration with a fixed $r$, maximise $\lambda$ subject to\vspace{-.6em}
            \begin{align*}
            \hspace{-.4cm}\lmi(X,Y,\lambda; (1-\lambda) \mdisc(0,0,r)+\lambda \mdisc(\theta,\infty,\infty)), \lambda \in [0,1].
            \end{align*}
            If the LMI is feasible, then decrease $r$. Otherwise, increase $r$. Iterate until error bound is reached and update $\gamarray(1), \lamarray(1)$ to $r, \lambda$, respectively\;  
        }
            $\phiarray(1)\leftarrow \pi/2$\;
    }{
        $\phiarray(1)\leftarrow \phiarray(0); \quad \gamarray(1)\leftarrow c_{\theta}\sin\phiarray(1)$\;
    }
    
    \eIf{$|\gamarray(N_\gamma)-\gamarray(1)|>$error bound}{
        $\gamarray\leftarrow$ uniformly sample $N_\gamma$ points from $[\gamarray(1),\gamarray(N_\gamma)]$\;
        \For{
            $k = 2:(N_\gamma-1)$
        }{
            $\zeta\leftarrow \frac{1}{2\lamarray(k-1)}\frac{\gamarray(k)}{\gamarray(k-1)}+\left(1-\frac{1}{2\lamarray(k-1)}\right)\frac{\gamarray(k-1)}{\gamarray(k)}$\;
            $\phiinit\leftarrow \arccos\left(\zeta \cos\phiarray(k-1)\right)$\; %
            $\varphi_{\mathrm{bst}}\leftarrow $solve $\frac{1+\sin(\varphi)}{\cos(\varphi)} = \frac{\gamarray(N_\gamma)}{\gamarray(k)}$ for $\varphi \in [0,\pi/2]$\;
            \SetKwBlock{GaPhiBisect}{\hspace{-.8em}$\{\phiarray(k),\lamarray(k)\}\leftarrow$ Bisection to minimise $\varphi\in [\varphi_{\mathrm{bst}},\phiinit]$}{end}
            \GaPhiBisect{
                \hspace{-1em}At each iteration with a fixed $\varphi$, maximise $\lambda$ subject to \vspace{-.6em}
                \begin{align*}
                    &\hspace{-1em}\mathrm{LMI}\left(X,Y,\lambda;(1-\lambda)\mdisc(0,0,\gamarray(k)) \vphantom{\frac{\gamarray(k)}{\cos\varphi}e^{\ii*\theta}}\right.\\
                    &\hspace{0em}   \left. +\lambda\mdisc\left(\theta,\frac{\gamarray(k)}{\cos\varphi},\gamarray(k)\tan\varphi\right)\right), \lambda \in [0,1].
                \end{align*}
                If the LMI is feasible, then decrease $\varphi$. Otherwise, increase $\varphi$. Iterate until error bound is reached and update $\phiarray(k),\lamarray(k)$ to $\varphi, \lambda$, respectively.
            }
        }
    }{
        $\gamarray \leftarrow \{0, \gamarray(N_\gamma), \gamarray(N_\gamma)\};\quad \phiarray\leftarrow \{\pi/2,\pi/2,\pi\}$\;
    }
}
\end{algorithm}

We can leverage the characterisations for three typical discs in \cref{tab:mi-char-ga-ang} to extract the gain, angle, and mixed information of a given $G$, the process of which is detailed in Algo.~\ref{alg:theta-ga-ang}. The algorithm consists of three main steps:
\begin{enumerate}
    \item Find the smallest gain-type discs (minimise $r$) that contains the frequencywise SRG of $G$ (the black disc in \cref{subfig:algo-2d});
    \item For a fixed $\theta$, perform a bisection search over $\theta$-symmetric cones (i.e, over $\varphi$) to identify the smallest one that contains the frequencywise $\theta$-SRG. The containment of $\theta$-SRG by a specific cone (for fixed $\theta, \varphi$) can be reformulated as containment by the family of $\theta$-angle-type discs parameterised by the magnitudes of their centers (the red discs in \cref{subfig:algo-2d} illustrate these $\theta$-angle-type discs, although none of them provides a valid bound in this particular example).
    \item Combine gain-type discs and angle-type discs (using the $S$-procedure) to  further tighten the covering region of the frequencywise $\theta$-SRG (the blue disc in \cref{subfig:algo-2d}).
\end{enumerate}
\begin{remark}
    As explained in the proof of \cref{lem:disc-bounded} and illustrated in \cref{fig:algo_demo}, each disc in the SRG plane corresponds to a halfspace in the DW space. Gain-type and angle-type discs (or halfspaces) are used to determine the gain and $\theta$-angular bounds of the $\theta$-SRG (or DW shell) of $G$. Furthermore, mixed-type discs (halfspaces) exploit the sharp point which lies on the boundary of the union of the gain- and angle-type discs to probe for more precise boundary of the $\theta$-SRG (DW shell).
\end{remark}

The output of Algo.~\ref{alg:theta-ga-ang} is then used to construct a robustness profile with respect to disc, conic, and sectorial uncertainties. The following example illustrates the construction process and the interpretation of this profile.

\begin{example}
    Consider one of the transfer matrices in %
    \cite[Example~2]{zhangPhantomDavisWielandtShell2025a}:
    \begin{align*}
        \renewcommand{\arraystretch}{1.5}
        G(s) = 
        \begin{bNiceArray}{cc}
            \frac{1}{s+3} & 0 \\
            \frac{0.5s}{s+2} & \frac{s}{s+1}
        \end{bNiceArray}=
        \ssreal(
            {\left[\begin{smallmatrix} -3 & 0 & 0\\0 & -2 & 0\\ 0 & 0 & -1\end{smallmatrix}\right]}
        )(
            {\left[\begin{smallmatrix}1 & 0 \\ 1 & 0 \\ 0 & 1 \end{smallmatrix}\right]}
        )(
            {\left[\begin{smallmatrix}1 & 0 & 0 \\ 0 & -1 & -1 \end{smallmatrix}\right]}
        )(
            {\left[\begin{smallmatrix} 0 & 0 \\ 0.5 & 1 \end{smallmatrix}\right]}
        ).
    \end{align*}
    
    \emph{Construction of the robustness profile}. By implementing Algo.~\ref{alg:theta-ga-ang} on $G(s)$, we obtain, for each $\theta \in [-\pi,\pi)$, a gain array $\gamarray$ and a $\theta$-angle array $\phiarray$. We then construct the profile shown in \cref{fig:r-profile} as follows.
    The ambient space can be viewed as $\cF \times [0,\infty)$. Each $\theta$ determines a ray in $\cF$, to which we attach a vertical $(\varphi,\gamma)$-orthant (see the dark gray orthant in \cref{subfig:the-phi-ga}). Within this $\theta$-orthant, we plot $\gamarray$ against $\phiarray$. Sweeping over all $\theta$ then yields the $\theta$–$\varphi$–$\gamma$ profile of $G(s)$ (the yellow surface in \cref{subfig:the-phi-ga}). Since the $\hinf$-$\theta$-angle takes value in $[0,\pi]$, the profile is only well defined over $\disc[\pi]$ (the disc enclosed within the black region).
    \begin{figure}[t]
        \centering\hspace{-1em}
        \subfloat[$\theta$-$\varphi$-$\gamma$ profile \label{subfig:the-phi-ga}]{
            \adjincludegraphics[Clip={.1\width} {.15\height} {.15\width} {.05\height},width=.5\columnwidth]{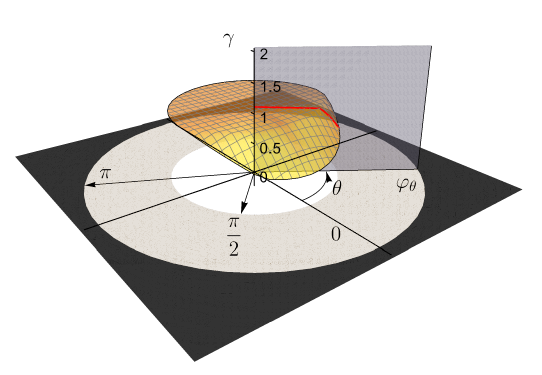}}
        \subfloat[$\theta$-$(\pi-\varphi)$-$1/\gamma$ profile \label{subfig:the-phi-ga-inv}]{
            \adjincludegraphics[Clip={.1\width} {.15\height} {.15\width} {.05\height},width=.5\columnwidth]{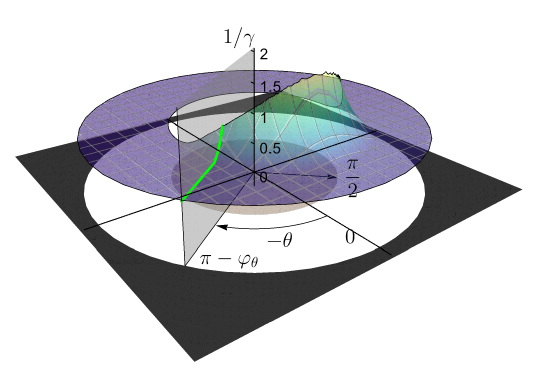}}
        \caption{Gain-angle robustness profile of a transfer matrix. \label{fig:r-profile}} \vspace{-2em}
    \end{figure}

    \emph{Interpretation}. Each point $(\varphi_0 e^{\ii*\theta_0},\gamma_0)$ on the profile shown in \cref{subfig:the-phi-ga} characterises the smallest region of the form $\disc[\gamma_0]\cup\cone*[\theta_0-\varphi,\theta_0+\varphi]$ with fixed radius $\gamma_0$ and varying half angular spread $\varphi$ (or, equivalently, $\disc[\gamma]\cup\cone*[\theta_0-\varphi_0,\theta_0+\varphi_0]$ with fixed $\varphi_0$ and varying radius $\gamma$), that covers the hyperbolic convex hull of $\cup_{\omega\in[0,\infty]} \srg[\theta_0](G(\ii*\omega))$. The corresponding optimal half angular spread (or radius) is given by $\varphi_0$ (or $\gamma_0$).
    As mentioned, we can also plot the complementary $\theta$-$(\pi-\varphi)$-$1/\gamma$ profile. Then each point $(\psi_0 e^{\ii*\vartheta_0}, \eta_0)$ on this profile guarantees that $G(s)$ is robustly stable with respect to the uncertainty set $\sesys_{\eta_0-\varepsilon_{\eta}}[-\vartheta_0-\psi_0-\varepsilon_{\psi},-\vartheta_0+\psi_0+\varepsilon_{\psi}]$ for arbitrarily small $\varepsilon_{\eta},\varepsilon_{\psi}>0$. So the values of $\theta$ for which the complementary profile expands towards the upper-right corner (within the local orthant) indicate directions along which the system exhibits stronger robustness against sectorial uncertainties.
\end{example}
\section{CONCLUSIONS}
We have studied the robustness of MIMO LTI feedback interconnections with respect to uncertainties induced by geometric regions in the SRG plane, in particular discs, cones, and sectors. The ($\theta$)-SRG is used as a graphical system representation, and we propose two ways of defining uncertainties based on a prescribed region in the SRG plane, one with a prior specification on $\theta$ and one without. The robust stability problem is dissected into a matrix robust nonsingularity problem and a system-level problem. 
At the matrix level, we unveil the fundamental requirements on the inducing regions, under both uncertainty definitions, such that SRG-based separation provides a necessary and sufficient condition for matrix robust nonsingularity. This is established through the lens of DW shell and its connection with ($\theta$)-SRGs.
By a frequencywise application of the matrix results, we derive system-level robust stability conditions with respect to a broad class of graphically induced uncertainties. 
Finally, for uncertainties induced by simple regions such as discs, cones, and sectors, we develop algorithms, no more complex than a bisection search with an LMI solved at each iteration, to construct a 3-D robustness profile of a system. This profile provides an intuitive and informative visualisation from which the robustness against sectorial uncertainties can be readily assessed.

\appendix
The proofs are largely geometric, which will be easier to perceive if readers familiarise themselves with the projection relationship between DW shells and $\theta$-SRGs, as well as the matrix tomography idea discussed in \cite[Sec.~VI]{zhangPhantomDavisWielandtShell2025a}. 
\begin{figure}[h]
    \centering
    \includegraphics[height=4cm]{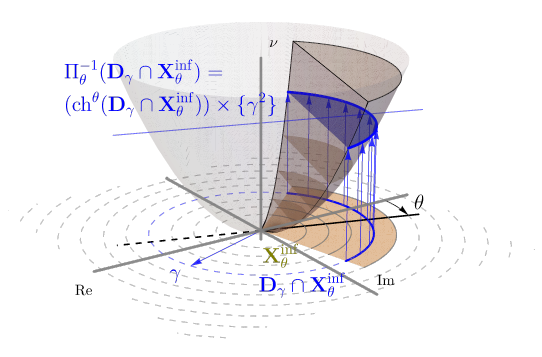} \vspace{-.6cm}
    \caption{Dissect the $\theta$-SRG region with concentric discs centered at the origin, and then project back to the DW space. \label{fig:pf-disc-dissect}}\vspace{-2em}
\end{figure}

\subsection{Proof of \Cref{prop:dw-union-general}\label{pf:dw-union-general}} 
We first show the \emph{formula for the $\theta$-uncertainty set}.
Let $\svmin(\Delta)$ and $\svmax(\Delta)$ denote the smallest and largest singular values of $\Delta$, respectively.
    For an arbitrary $\Delta\in\symset_\theta$, by definition $\srg[\theta](\Delta)\subseteq \infreg[\theta]$. Then it follows that the height interval $[\svmin(\Delta)^2, \svmax(\Delta)^2]$ of $\dwshell(\Delta)$ is always contained in $[\gmin(\infreg[\theta])^2, \gmax(\infreg[\theta])^2]$.
We shall investigate the cross sections of $\dwshell(\Delta)$ by horizontal hyperplanes (layer by layer) and relate them, via $\proj[\theta]$, to intersections between discs and $\srg[\theta](\Delta)$  (circle by circle).
\begin{align*}
    &\dwshell(\Delta) = \cup_{\gamma}\, \dwshell(\Delta) \cap \plane[\gamma^2] \\
    &\subseteq \cup_{\gamma}\, \proj[\theta]\inv\, (\proj[\theta] (\dwshell(\Delta) \cap \plane[\gamma^2])) \\
    &= \cup_{\gamma}\, \proj[\theta]\inv\, (\disc[\gamma]\cap \srg[\theta](\Delta)) \subseteq \cup_{\gamma}\, \proj[\theta]\inv\, (\disc[\gamma]\cap \infreg[\theta])  \\
    &\subseteq \cup_\gamma\, \left((\chordhull((\disc[\gamma]\cap\infreg[\theta])))\times \{\gamma^2\}\right). \numberthis \label{eq:dw_in_rhs}
\end{align*}
Note that $\proj[\theta]\inv\,(\cdot)$ denotes the preimage of a $\theta$-symmetric set under the mapping $\proj[\theta]$ with domain $\epi(\parab[1])$. The interval over which $\gamma$ ranges in the unions above can be taken as $[\gmin(\infreg[\theta]), \gmax(\infreg[\theta])]$ due to the aforementioned singular value containment. Also note that the last ``$\subseteq$'' becomes an equality when the set intersected with $\disc[\gamma]$ is $\theta$-symmetric (which $\infreg[\theta]$ clearly is). This shows the left-hand side (LHS) is contained in the right-hand side (RHS).

Conversely, when the size of matrices is at least 2, for any given point $(z,\gamma^2)$ on the RHS, there must exist, by definition, $(s,\gamma^2) \in \parab[1]$ where $s \in \disc[\gamma]\cap \infreg[\theta]$ and $z$ lies on the line segment connecting $s$ and $s_\theta$. 
    Let $\Delta = \blkdiag{s\mati_{m}, s_\theta \mati_{n-m}}$ where $m$ can be any integer from $1$ to $n-1$, then it is straightforward to check that $\srg[\theta](\Delta) \subseteq \infreg[\theta]$ and 
    hence $\Delta \in \symset_\theta$ while $(z,\gamma^2) \in \dwshell(\Delta)$. 
    The RHS is therefore contained in the LHS. The equality for the $\theta$-uncertainty set is then established.

The \emph{equality for the general uncertainty} $\symset$ can be shown following similar arguments. For any $\Delta \in \symset$, there exists some $\vartheta$ such that $\srg[\vartheta](\Delta)\subseteq \symreg$. Then, by setting $\theta = \vartheta$ in \cref{eq:dw_in_rhs}, we have
\begin{align*}
    \dwshell(\Delta) 
    &\subseteq  \cup_\gamma\, \left((\chordhull[\vartheta]((\disc[\gamma]\cap\infreg[\vartheta])))\times \{\gamma^2\}\right)\\
    &\subseteq \cup_\gamma \, \left((\cup_{\phi}\,\chordhull[\phi]((\disc[\gamma]\cap\symreg)))\times \{\gamma^2\}\right) \\
    &= \cup_\gamma\, (\convhull((\disc[\gamma] \cap \symreg)) \times \{\gamma^2\}).
\end{align*}
The reverse containment (for matrices of size at least 2) follows analogously to the $\theta$-uncertainty case by constructing a scalar multiple of a unitary matrix.

In the scalar case, $\dwshell(\Delta)= \{(\Delta,|\Delta|^2)\}$ is a point confined to $\parab[1]$, and $\srg[\vartheta](\Delta) = \{\Delta,\Delta_\vartheta\}$ is a $\vartheta$-conjugate pair, which collapses to $\Delta$ when $\vartheta = \angle\Delta \mod\pi$. It follows that $\symset_\theta = \{\Delta \in \cF: \Delta \in \infreg[\theta]\} = \infreg[\theta]$ and $\symset = \{\Delta \in \cF: \Delta \in \symreg\} = \symreg$.  
Then it is fairly straightforward to establish that the unions on the LHS are the intersections between their corresponding RHS and $\parab[1]$.

\subsection{Proof of \Cref{thm:mat-theta}\label{pf:theta-uncert}}
Under the assumption that $\symreg$ is symmetric about the $\theta$-axis (which means $\symreg = \infreg[\theta] = \supreg[\theta]$ and $\symset_\theta = \infreg*[\theta] = \supreg*[\theta]$), it follows from \cref{prop:dw-union-general} that 
\begin{align*}
        \proj[\theta]\inv \symreg = \bigcup_{\gmin(\symreg)\leqsl \gamma \leqsl \gmax(\symreg)} 
        (\chordhull((\disc[\gamma] \cap \symreg)) \times \{\gamma^2\} \vphantom{\chordhull(\{\disc[\gamma] \cap \symreg_\theta\})}) .
\end{align*}
Then by \cref{corol:mrn-dw,prop:dw-union-general}, MRN holds for $G$ with respect to $\symset_\theta$ if and only if $\invdwshell(-G)\cap \proj[\theta]\inv \symreg  = \emptyset$. 
Unlike \cref{prop:ga-co-sec-char,prop:dw-union-general}, where our goal is to precisely characterise the union of DW shells, we do not distinguish a scalar case here (or in Appendix~\ref{pf:general-uncert}), since in this case, $\invdwshell(-G)$ always resides in $\parab[1]$ and its separation from $\proj[\theta]\inv\,\symreg$ (or $\proj[\theta]\inv\,\circhull(\symreg)$) is equivalent to separation from $\dwshell(\symset_\theta)$ (or $\dwshell(\symset)$).

Apply $\proj[\theta]$ to the two sets in the separation condition and it follows that the 3-D separation condition holds if $\invsrg[\theta](-G)\cap \symreg=\emptyset$. 
Conversely, suppose that there exists some $z \in \invsrg[\theta](-G)\cap \symreg$.
    Since $z\in \invsrg[\theta](-G)$, there must exist some $\ztld$ on the line segment $\lineseg(z,z_\theta)$ connecting $z$ and $z_\theta$ such that $(\ztld, |z|^2) \in \invdwshell(-G)$.
    Meanwhile, due to the symmetry of $\symreg$, $z\in \symreg$ implies that $z_\theta\in \symreg$. Therefore, $\lineseg(z,z_\theta)\subseteq \chordhull((\disc[|z|] \cap \symreg))$ and we have
    \begin{align*}
        (\ztld, |z|^2) &\in \lineseg(z,z_\theta)\times \{|z|^2\}\\
        &\subseteq  \chordhull((\disc[|z|] \cap \symreg)) \times \{|z|^2\} 
        \subseteq \proj[\theta]\inv\,\symreg.
    \end{align*}
    This shows that $(\ztld, |z|^2) \in \invdwshell(-G)\cap \proj[\theta]\inv\,\symreg$. Hence, $\invdwshell(-G)\cap \proj[\theta]\inv\,\symreg$ also implies $\invsrg[\theta](-G)\cap \symreg = \emptyset$.
    The proof is now complete.
\subsection{Proof of \Cref{thm:mat-general} \label{pf:general-uncert}}
\textbf{Sufficiency}:
    Note that, since $\circhull(\symreg)$ is $\theta$-symmetric,
    \begin{align*}
        \proj[\theta]\inv \circhull(\symreg) = 
        \bigcup_{\gmin(\symreg)\leqsl\gamma\leqsl\gmax(\symreg)} \chordhull((\disc[\gamma]\cap \circhull(\symreg))) \times \{\gamma^2\}.
    \end{align*}
    Furthermore, it holds that $\convhull((\disc[\gamma]\cap \symreg)) \subseteq \chordhull((\disc[\gamma]\cap \circhull(\symreg)))$ for all $\gamma$. Thus, by \cref{prop:dw-union-general}, we have:
    $\dwshell(\symset)\subseteq \proj[\theta]\inv \circhull(\symreg)$, indicating $\proj[\theta]\dwshell(\symset) \subseteq \circhull(\symreg)$.
    It follows that
    \begin{align*}
        &\invsrg[\theta](-G)\cap \circhull(\symreg) = \emptyset \\
        &\hspace{1cm}\Rightarrow
        \invsrg[\theta](-G)\cap \proj[\theta]\dwshell(\symset) = \emptyset \\
        &\hspace{2cm}\Rightarrow
        \invdwshell(-G) \cap \dwshell(\symset) = \emptyset, 
    \end{align*} which further implies MRN holds for $G$ with respect to $\symset$ by \cref{corol:mrn-dw}.

\textbf{Necessity} when $\symreg$ is $\theta$-symmetric and $\theta$-circularly connected: 
Recalling \cref{obs:circhull}, $\symreg = \circhull(\symreg)$ and thus 
    \begin{align*}
        \chordhull((\disc[\gamma]\cap\circhull(\symreg)))
        = \chordhull((\disc[\gamma]\cap\symreg))
        = \convhull((\disc[\gamma]\cap \symreg)) 
    \end{align*} for all $\gamma \in [\gmin(\symreg),\gmax(\symreg)]$.
This together with \cref{prop:dw-union-general} shows that $\dwshell(\symset) = \dwshell(\symset_\theta)$. Therefore, the following statements are all equivalent:
\begin{enumerate}
    \item MRN holds for $G$ with respect to $\symset$;
    \item $\invdwshell(-G)\cap\dwshell(\symset) = \emptyset$;
    \item $\invdwshell(-G)\cap\dwshell(\symset_\theta) = \emptyset$;
    \item $\invsrg[\theta](-G)\cap\symreg = \emptyset$; 
    \item $\invsrg[\theta](-G)\cap \circhull(\symreg) = \emptyset$.
\end{enumerate}
where 3) $\Leftrightarrow$ 4) was already shown in \cref{thm:mat-theta}. 

\textbf{Lack of universal necessity} when any of the $\theta$-symmetry and $\theta$-circular connectedness fails to hold: We show that whenever one of these two properties is missing, then a $G$ can be constructed such that $\invsrg[\theta](-G) \cap \circhull(\symreg)$ is nonempty but MRN still holds for $G$ with respect to $\symset$:
\begin{itemize}
    \item \emph{Not $\theta$-symmetric}: In this case, there must exist some $z\in \symreg$ such that $z_\theta\notin \symreg$ (then such $z$ must not be zero). However, $z_\theta \in \circhull(\symreg)$ by its definition.
    Let $G = - \frac{1}{z_\theta} \mati_n$. Note that $\invdwshell(-G)$ is a singleton $\{(z_\theta,|z_\theta|^2)\}$, and this point is not inside $\convhull((\disc[|z_\theta|^2]\cap\symreg))\times \{|z_\theta|^2\}$. Therefore, it follows from \cref{corol:mrn-dw} and \cref{prop:dw-union-general} that MRN holds for $G$ with respect to $\symset$. However, $\invsrg[\theta](-G) = \{z_\theta\} \subseteq \circhull(\symreg)$.
    \item \emph{Not $\theta$-circularly connected}: There must exist some $\gamma\in [\gmin(\symreg),\gmax(\symreg)]$ such that $\disc[\gamma]\cap \symreg$ has a disconnected intersection with at least one of the two half-planes determined by the $\theta$-axis. 
    Choose $z$ from a gap in this disconnected intersection (i.e., $|z| = \gamma$ and $|\angle z| \in [\angmin[\theta](\symreg;\gamma),\angmax[\theta](\symreg;\gamma)]$, but $z \notin \symreg$).
    Then $G = -\frac{1}{z} \mati_n$ serves as an example which violates the SRG condition but still satisfies MRN with respect to $\symset$. 
\end{itemize}

\subsection{Proof of \Cref{prop:sym-set-char} \label{pf:sym-set-char}}
It follows directly from definitions that $\symset_\theta \subseteq \symset$ in general. 
By \cref{prop:dw-union-general}, any $\Delta$ from $\symset$ will satisfy:
    $
        \dwshell(\Delta) \subseteq \dwshell(\symset).
    $
    When $\symreg$ is both $\theta$-symmetric and $\theta$-circularly connected, 
        we have $\dwshell(\symset) = \dwshell(\symset_\theta)$ (see the necessity part in Appendix~\ref{pf:general-uncert}) and hence $\dwshell(\Delta)\subseteq \dwshell(\symset_\theta)$.
    Then apply $\proj[\theta]$ to both sides and we have $\srg[\theta](\Delta) \subseteq \symreg$, which shows that $\Delta \in \symset_\theta$. 

If  $\theta$-symmetry fails, 
    we choose $\Delta = z \mati_n$ where $z \in \symreg \setminus \infreg[\theta]$.
If $\theta$-circular connectedness fails,
    we choose $\Delta = \blkdiag{z_1\mati_m, z_2 \mati_{n-m}}$ where $m$ can be any integer from $1$ to $n-1$ and $z_1, z_2$ are taken from two disconnected arcs of $\disc[|z|]\cap\symreg$ that lie on the same side of the $\theta$-axis.
A direct verification then shows that $\Delta$ is in $\symset$ but not in $\symset_\theta$.

\bibliographystyle{IEEEtran}

\end{document}